\documentclass[11pt]{article}
\usepackage[a4paper, margin=1in]{geometry}
\usepackage{amsmath,amssymb,amsthm}
\usepackage{xcolor}
\usepackage{graphicx}
\usepackage{hyperref}
\usepackage{enumitem}
\usepackage{thmtools}
\usepackage[capitalize,nameinlink]{cleveref}
\usepackage{crossreftools}
\usepackage{dsfont}
\usepackage{authblk}

\newtheorem{theorem}{Theorem}
\newtheorem{lemma}[theorem]{Lemma}

\newtheorem{observation}[theorem]{Observation}
\newtheorem{definition}[theorem]{Definition}

\title{Maximum Cardinality Matching in $n^{2+o(1)}$ Time}
\author{George Z. Li}
\author{Jason Li}
\author{Junkai Zhang}
\affil{Carnegie Mellon University $\{\texttt{gzli}, \texttt{jmli},\texttt{junkaizh}\}$\texttt{@cs.cmu.edu}}
\date{}

\begin{document}

\maketitle

\begin{abstract}
    We consider the maximum cardinality matching problem in general graphs, and give an $n^{2+o(1)}$-time algorithm for the problem. Our work builds on recent work on combinatorial algorithms for maximum flow and extends the techniques to skew-symmetric flow problems which capture general matchings as a special case.
\end{abstract}

\textbf{Concurrent Work:} OpenAI released an $m^{1+o(1)}$-time algorithm for the same problem on October 6, 2026. We first obtained our result the day before this announcement and the techniques seem orthogonal. We believe our results are independently interesting, giving a simpler combinatorial algorithm for general matchings.

\textbf{Acknowledgments:}
Sanjeev Khanna and Junkai Song mentioned skew-symmetric flows as an approach towards general matchings algorithm when the first author was visiting NYU in Feburary 2026. Since then, the authors have discussed the approach in this paper with several humans (Aaron Bernstein and Isaac Hair) and several non-humans (GPT 5.5, 5.6, and 6).

\section{Introduction}

In recent years, there has been tremendous progress in designing faster algorithms for classical problems in graph algorithms. From maximum flow and min-cost flow~\cite{CKLPGS22, BBLST25, HJS26b} to minimum cut~\cite{ALPS23,HLRW24,FLSW26} to shortest paths~\cite{BNW22,DMMSY25,HLLZ26}, there have been numerous breakthrough on designing almost-linear time algorithms for many classical graph problems. One problem which remained untouched was the maximum matching problem in general graphs, where the classical $O(m\sqrt{n})$-time algorithm~\cite{MV80} remains the fastest in sparse graphs and matrix-multiplication based algorithms~\cite{MS04} remain fastest on dense graphs. Our main result is an improvement on dense graphs:

\begin{theorem}\label{thm:main}
        Given a unweighted general graph $G$ with $n$ vertices, we can compute a maximum matching of $G$ in $n^{2+o(1)}$ time with high probability.
\end{theorem}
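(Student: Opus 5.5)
We reduce maximum matching in $G$ to a maximum skew-symmetric flow problem. Build the standard double cover $H$. For every vertex $v$ of $G$ we create two copies $v$ and $v'$, and for every edge $uv$ we add arcs $u\to v'$ and $v\to u'$, each of capacity one. We then add a source $s$ with arcs $s\to v$ and a sink $s'$ with arcs $v'\to s'$, all of unit capacity. The map $x\mapsto x'$ (with $s\mapsto s'$) reverses arcs and preserves capacities, so $H$ is skew-symmetric. By the classical result of Goldberg--Karzanov (and Tutte), maximum integral \emph{skew-symmetric} flows in $H$ correspond to maximum matchings of $G$: a flow of value $2k$ yields a matching of size $k$. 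Ordinary flows in $H$ give only fractional (2-)matchings, and this is exactly why skew-symmetry is needed. The plan therefore has two parts. First, compute an (approximately) maximum integral skew-symmetric flow in $n^{2+o(1)}$ time. Second, convert it to a matching and finish with augmentations. Since $H$ has $O(n)$ vertices and $O(m)\le O(n^2)$ arcs, "almost linear in $m$" would already suffice, but we only aim for $n^{2+o(1)}$.

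For the flow computation, we follow the recent combinatorial max-flow framework: push-relabel or weighted-push-relabel on an expander-type hierarchy, as in the combinatorial $n^{2+o(1)}$ max-flow algorithms. We make every step symmetric. We maintain a pseudoflow $f$ with $f(a)=f(a')$, and labels with $\ell(v)+\ell(v')$ constant up to the symmetric shift. Whenever we push on $a$, we push the same amount on $a'$. The key invariant to verify is that the residual graph $H_f$ stays skew-symmetric, so the potential/label certificate argument goes through. The standard analysis bounds the total work by $n^{2+o(1)}$ via $O(n)$ label increase per vertex times degree, or via the $n^{o(1)}$-round multiplicative-weights/expander argument. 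Symmetrizing at most doubles this work. The output is a skew-symmetric flow that is within an additive $n^{1-\epsilon}$ or $n^{o(1)}$ of optimal, together with a symmetric dual cut certificate.

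The main obstacle is that a symmetric flow that is maximum among \emph{symmetric fractional} flows is not a maximum \emph{regular} integral flow. The gap is odd-cycle (blossom) structure: augmenting paths in the skew-symmetric residual graph must be \emph{regular} (they must not use both $a$ and $a'$ with insufficient residual capacity), and a symmetric half-integral flow may have no integral rounding of equal value. To handle this we would first try the following. Run the symmetric flow routine to obtain a half-integral symmetric flow whose value is within $n^{1-\Omega(1)}$ of the fractional optimum, which equals the fractional matching number. Round it using the half-integral structure: the half-integral edges form vertex-disjoint odd cycles, and each odd cycle loses at most $1/2$. The loss from rounding could be large, however, if there are many odd cycles. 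So we instead bound the integrality gap with respect to the remaining deficiency by Tutte--Berge. After rounding, we obtain a matching $M$ with $\nu(G)-|M|\le k$ for some $k=n^{1-\Omega(1)}$, or we iterate to make the error $n^{o(1)}$ relative deficiency. We then finish with $k$ phases of single augmenting-path search, each using Edmonds/Micali--Vazirani in $O(m)=O(n^2)$ time. This costs $O(kn^2)$, which is too much unless $k=n^{o(1)}$.

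Hence the crux is making the symmetric flow algorithm \emph{integral and regular} directly. We would do this by replacing each augmentation or push step by a regular-path step, enforcing regularity via blossom contraction inside the label-increase framework. The analysis must show that contracting blossoms does not blow up the $n^{2+o(1)}$ potential argument. We expect this to be the hardest step. What we would try first is to show that label monotonicity survives contraction, in analogy with Goldberg--Karzanov's regular-path analysis.
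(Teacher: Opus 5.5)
\paragraph{Verdict: genuine gap.} Your reduction to integral skew-symmetric flow on the double cover is exactly the paper's matching graph $M_G$. Beyond that, the proposal is not a proof. The step you identify as the crux, computing an \emph{integral, regular} flow in $n^{2+o(1)}$ time, is left as a hope (``enforce regularity via blossom contraction inside the label-increase framework''), with no algorithm and no analysis.

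The fallback you sketch does not close the gap either. Rounding a near-optimal half-integral symmetric flow can leave a deficiency of $\Theta(n)$ relative to $\nu(G)$, not $n^{1-\Omega(1)}$. Take $n/6$ disjoint copies of two triangles joined by an edge. Putting value $1/2$ on every triangle edge is an optimal basic fractional matching, and rounding its odd cycles yields $2$ edges per copy, while each copy has a perfect matching of size $3$. The appeal to Tutte--Berge does not repair this. As you note yourself, even a deficiency $k=n^{1-\Omega(1)}$ finished by $O(n^2)$-time Micali--Vazirani phases costs $n^{3-\Omega(1)}$. So as written, the proposal gives no route below the $O(mn)$ bound of plain augmenting regular paths.

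\paragraph{The missing idea.} The paper never goes through fractional or push-relabel flows. It keeps the exact augmenting-regular-path method, namely the Goldberg--Karzanov regular reachability search, which already handles blossoms as self-symmetric fragments. It makes each search sublinear by running it, resumably, on a sparse sampled subgraph $H$ of the residual graph. The ingredients are as follows.
\begin{itemize}
\item \textbf{Charges.} When the search halts at a barrier $(A,A',X_1,\dots,X_k,M)$ of $H$, each active residual edge gets a charge $d_e\in\{-2,\dots,2\}$. It is built from a potential ($+1$ on $A'$, $-1$ on $A$), plus $1$ on barrier edges, minus $1$ for crossing the boundary of a \emph{small} fragment ($|X_i|\le 2n/K$). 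Every negatively charged edge breaks the barrier.
\item \textbf{Stochastic reweighting.} One samples $e$ with probability proportional to $|d_e|/x_e$. If $d_e<0$, the edge is added to $H$ and $x_e$ is halved; if $d_e>0$, $x_e$ is multiplied by $3/2$.
\item \textbf{Energy bound.} The potential $\sum_e \ln(2U/x_e)$ bounds the number of such toggles by $O(n^2\log U)$.
\item \textbf{Negative-charge lemma.} Every integral symmetric flow of value at least $K/2$ in the residual graph has negative total charge. The potential contributes $-2|f|$, each small fragment contributes at most $0$ per regular path, and the fewer than $K/2$ large fragments contribute less than $K$. This drives a supermartingale argument showing that no $x_e$ reaches $U$ w.h.p.
\item \textbf{Sampling.} Charge-proportional sampling is reduced, via vertex heights in $\{-1,0,1\}$, to cut sampling in an auxiliary undirected graph. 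That graph is maintained with a dynamic expander hierarchy in $n^{o(1)}$ amortized time. Excluding large fragments is what keeps the auxiliary-graph updates at $O(n^2/K)$ per path.
\end{itemize}
Together this yields $K/2$ augmentations in $n^{2+o(1)}$ time. Trying $K=2^i$ over $O(\log n)$ rounds then gives the theorem. None of these ingredients (barrier-derived charges, stochastic multiplicative reweighting, the negative-charge lemma) appears in your plan.
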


In fact, our algorithm extends to the maximum integral symmetric flow problem on simple graphs, i.e., graphs with no parallel edges between any two vertices.

\begin{restatable}{theorem}{maximumisflow}\label{thm:maximum-is-flow}
        Given a simple, directed, skew-symmetric graph $G$ with $n$ vertices, and two symmetric vertices $s$ and $s'$, we can compute a maximum integer skew-symmetric flow from $s$ to $s'$ in $n^{2+o(1)}$ time with high probability.
\end{restatable}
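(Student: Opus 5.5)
The plan is to adapt the combinatorial push-relabel / weighted-expander-hierarchy framework for maximum flow in dense graphs (the $n^{2+o(1)}$ combinatorial max-flow line of work cited above) to the skew-symmetric setting. We preserve skew-symmetry at every step, then round the resulting symmetric fractional flow to an integral one at the end. Concretely, we maintain a flow $f$ with $f(u,v)=f(v',u')$ for every arc $(u,v)$. The residual graph $G_f$ is then again skew-symmetric. By the classical characterization of Goldberg and Karzanov, $f$ is a maximum skew-symmetric flow if and only if $G_f$ contains no \emph{regular} augmenting path from $s$ to $s'$, that is, a path that never uses both an arc and its mate beyond residual capacity.

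\textbf{Step 1 (symmetrized augmentation).} Given any (not necessarily symmetric) $s$-$s'$ flow $g$ in $G_f$, the symmetrization $\tfrac12(g+g')$, where $g'$ is the mate of $g$, is a symmetric flow of the same value. Hence fractional maximum skew-symmetric flow equals ordinary maximum flow in $G$. We can therefore compute a fractional symmetric max flow with the existing $n^{2+o(1)}$ combinatorial max-flow algorithm and symmetrize.

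\textbf{Step 2 (integrality gap and rounding).} The integral symmetric optimum can be strictly smaller than the fractional one; the odd cycles in matchings are exactly the obstruction. So we will not round directly. Instead, we run the algorithm on the residual graph in phases. In each phase we compute, via the Step-1 routine applied to $G_f$ restricted to short regular paths, a symmetric blocking-type flow. We then convert its half-integral part into an integral symmetric flow by cancelling along odd "blossom-like" cycles. This generalizes the classical fact that half-integral perfect fractional matchings decompose into edges and odd cycles. Each phase is implemented with the same expander-hierarchy/push-relabel machinery, with costs charged so that the total over $n^{o(1)}$ phases stays $n^{2+o(1)}$. We use simplicity of $G$ to bound the residual flow after each phase: the deficit is at most $n^{2}/d^{2}$ after reaching distance $d$, as in Hopcroft–Karp/Even–Tarjan style arguments. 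This lets us switch to augmenting along single regular paths once the deficit is small.

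\textbf{Step 3 (finishing).} When the remaining deficit $k$ is small, we find regular augmenting paths one at a time. A single path can be found in $O(n^2)$ time by a symmetric BFS (a Kocay–Stone/Goldberg–Karzanov regular-reachability search, essentially Edmonds' blossom search on $G_f$). Balancing the phases against this BFS gives total time $n^{2+o(1)}$. Correctness with high probability is inherited from the randomized expander decomposition.

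\textbf{Main obstacle.} The hardest step is Step 2: the push-relabel/expander analysis uses distance labels that are valid for arbitrary residual paths, while we must only augment along regular paths. I would first try to show that the symmetrized potential $\tfrac12(d(v)+D-d(v'))$ is a valid labeling certifying regular-path distances, so that the dense-graph analysis transfers. Where this fails on blossom structures, I would contract odd symmetric cycles as in Edmonds' algorithm and charge the contractions to the hierarchy's recomputation.
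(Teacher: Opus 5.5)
There is a genuine gap, and it sits exactly at the step you flag as the main obstacle. The whole difficulty of the theorem lives there.

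\textbf{Regularity is invisible to max-flow machinery.} Everything in Steps 1--2 is built from ordinary residual paths: symmetrization, push-relabel distance labels, blocking flows, and the potential $\tfrac12(d(v)+D-d(v'))$. The obstruction to such paths is an $s$--$s'$ cut. The obstruction to integral symmetric augmentation is instead a \emph{barrier} $(A,A',X_1,\ldots,X_k,M)$ with self-symmetric fragments. Regularity is a property of a path, not of any subgraph, so ``the Step-1 routine applied to $G_f$ restricted to short regular paths'' is not something an ordinary max-flow algorithm can compute.

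Concretely, let $G$ be $t$ disjoint triangles $\{a,b,c\}$, and let $f$ come from a maximum matching, so $|f|=2t$.
\begin{itemize}
\item There is no regular $s$--$s'$ path in $G_f$. For example, $s\to c\to a'\to b\to c'\to s'$ is a residual path, but it uses both $(s,c)$ and its mate $(c',s')$.
\item Yet $G_f$ still carries an ordinary flow of value $t$. Its symmetrization, added to $f$, is the all-$\tfrac12$ fractional matching on every triangle.
\item Cancelling along these odd cycles removes all of it.
\end{itemize}
So symmetrize-then-round has no progress guarantee, and ordinary distance labels cannot certify regular distances around such structures. ``Contract odd cycles as in Edmonds'' is the unsolved part, not a patch.

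\textbf{The running-time accounting does not close.} Even granting Step 2, the numbers do not add up. A single regular-path search costs $\Theta(m)=\Theta(n^2)$, so finishing with deficit $k$ costs $\Theta(n^2k)$. This is $n^{2+o(1)}$ only if $k=n^{o(1)}$. By your bound $k\le n^2/d^2$, that requires regular distance $d\ge n^{1-o(1)}$. But $n^{o(1)}$ phases, each guaranteeing only $d\mapsto d+1$, cannot reach that distance. Balancing phases against single searches in the Hopcroft--Karp/Even--Tarjan manner gives $O(m\sqrt n)=O(n^{5/2})$ at best on dense graphs.

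\textbf{What the paper does instead.} The missing idea is to find \emph{each} regular path in sublinear amortized time, namely $n^{2+o(1)}/K$ while at least $K$ units of flow remain. The paper achieves this as follows.
\begin{itemize}
\item It runs the resumable Goldberg--Karzanov regular-reachability search on a sparse sampled subgraph $H$ of $G_f$.
\item At each barrier it assigns charges $d_e$: a potential that is $-1$ on $A$ and $+1$ on $A'$, plus fragment terms that ignore fragments of size $>2n/K$.
\item It samples an edge with probability proportional to $|d_e|w_e$, using a dynamic-expander cut sampler.
\item If $d_e<0$ the edge bypasses the barrier and is added to $H$; otherwise $x_e$ is multiplied by $3/2$.
\end{itemize}
An energy argument bounds the number of such toggles by $O(n^2\log U)$. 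A supermartingale argument keeps all weights bounded with high probability; it uses that $\sum_e f_e d_e<0$ for every residual integral symmetric flow of value at least $K/2$. Doubling over $K$ then finishes in $O(\log n)$ rounds. No fractional flow and no rounding ever appear.
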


\subsection{Technical Overview}

Given a directed, skew-symmetric graph, a maximum integral symmetric flow can be computed by iteratively finding a so-called \emph{regular} path in the current residual graph~\cite{GK96}. Finding a regular path can be implemented in linear time, so the algorithm takes $O(mF)$ time where $F$ is the value of the maximum integral symmetric flow. We have $F\le n$ on a simple graph, so the entire algorithm takes $O(mn)$ time.

At a high level, we speed up this algorithm by finding each regular path in \emph{sublinear} time, without looking at the entire residual graph. Our algorithm is inspired by a recent maximum flow algorithm~\cite{Li26}, which finds augmenting paths in a sparsifier of the current residual graph. Unlike the algorithm of~\cite{Li26}, ours does not compute a sparsifier upfront; instead, we alternate between searching the sampled graph for a regular path and sampling new edges to add to this graph.

For illustration, we first describe how our algorithm would find (not necessarily regular) augmenting paths in the standard maximum flow problem. The sampled graph is initially empty, and a search from the source vertex $s$ returns just $\{s\}$ as the set $S\subseteq V$ of vertices reachable from $s$. The algorithm tries to sample an outgoing edge from the current reachable set $S$, which would add a new vertex into $S$. Unfortunately, sampling edges from cuts in directed graphs is difficult in general. Instead, our algorithm uses a cut-sampling data structure on the underlying \emph{undirected} residual graph, which samples an edge on the (undirected) boundary of $S$. If a constant fraction of edges on the boundary of $S$ are outgoing from $S$, then the sampled edge is outgoing with constant probability. The issue is when most of the edges on the boundary are incoming to $S$. The algorithm of~\cite{Li26} ``toggles'' such ``unbalanced'' cuts by simultaneously increasing the sampling weight of all outgoing edges and decreasing the sampling weight of all incoming edges, a procedure handled implicitly by a dynamic data structure. Instead, our algorithm only modifies the weight of the sampled edge: if it is outgoing from $S$, its sampling weight is increased, and if it is incoming to $S$, its sampling weight is decreased. In a sense, our algorithm implements a ``stochastic'' version of the cut toggling operation in~\cite{Li26}.

We now return to the integral symmetric flow problem, which requires finding regular paths in the current residual graph. The main technical difficulty is that the absence of a regular path is not characterized by a cut, but a \emph{barrier} that decomposes the vertex set into arbitrarily many fragments, among other sets of vertices. Here, our algorithm takes inspiration from the shortest (weighted) regular path algorithm of~\cite{GK96}, which also alternates between searching and adding new edges to the current search graph. We assign \emph{charges} to the edges according to the barrier, such that edges with negative charge bypass the barrier similar to how the outgoing edges from $S$ bypass the cut. The charges are almost exactly the adjustment values in the shortest regular path algorithm of~\cite{GK96}, except that for technical reasons, we ignore the contribution of large fragments to the charges. When there are $K$ remaining augmenting path iterations, we define large fragments to be those with more than $\frac{2n}K$ vertices, of which there are at most $K/2$. Our algorithm targets a running time of $n^{2+o(1)}/K$ for the next augmenting path, which adds up to $n^{2+o(1)}$ time overall.

\section{Preliminaries}

As mentioned before, we consider the more general problem of maximum integral-symmetric flows, introduced in~\cite{GK04}, which captures the general matching and maximum flow problems as special cases. In this section, we will introduce all notations, definitions, and basic results we later use for graphs and skew-symmetric flows.

\subsection{Integral Skew-Symmetric Flows}\label{sec:is-flows}

\begin{definition}[Skew-symmetric graphs]\label{def:skew-symmetric-graph}
    A directed multigraph $G=(V,E)$ with positive integer edge capacities $c_e$ is called \emph{skew-symmetric} if it has a mapping $\sigma$ that maps vertices and edges in $G$ to their symmetric counterparts in $G$. This mapping satisfies the following conditions: 
    \begin{enumerate}[label=\roman*.]
        \item $\sigma(x)\ne x$ and $\sigma(\sigma(x))=x$ for every vertex and edge $x$
        \item if $e=(u,v)$ is an edge, then its symmetric edge $\sigma(e)$ is an edge from $\sigma(v)$ to $\sigma(u)$, and they have the same capacity, i.e., $c_e = c_{\sigma(e)}$
    \end{enumerate}
\end{definition}

For simplicity, we use $v'$ to denote the symmetric vertex $\sigma(v)$ of a vertex $v$, and $e'$ to denote the symmetric edge $\sigma(e)$ of an edge $e$. The symmetric set of a vertex set $S$ is defined as $S'=\{\sigma(v):v\in S\}$, and the symmetric path of a path $P=(e_1,\ldots, e_k)$ is defined as $P'=(e_k',\ldots,e_1')$.

\begin{definition}[Maximum integral symmetric flow (IS-Flow) problem]\label{def:is-flow}
    Given a skew-symmetric graph $G=(V,E)$ with a pair of symmetric vertices $s$ and $s'$, the \emph{maximum integral symmetric flow problem} asks to find the maximum flow from $s$ to $s'$ such that the flow $f_e$ on edges is integral and symmetric ($f_e = f_{e'}$).
\end{definition}

The general matching problem can be formulated as a maximum integral symmetric flow problem on the following ``matching graph'' construction.

\begin{definition}\label{def:matching-graph}
        For a graph $G=(V,E)$, the \emph{matching graph} $M_G$ is a skew-symmetric graph constructed as follows:
        \begin{itemize}
            \item For each vertex $v\in V$, create a vertex $v$ with its symmetric counterpart $v'$.
            \item For each edge $(u,v)\in E$, create an edge $(u,v')$ with its symmetric edge $(v,u')$. Both edges have capacity $1$.
            \item Add a source vertex $s$ with its symmetric counterpart $s'$. For every vertex $v\in V$, add edge $(s,v)$ with its symmetric edge $(v',s')$. Both edges have capacity $1$.
        \end{itemize}
\end{definition}

\begin{observation}\label{obs:matching-graph}
        Any general matching in the original graph $G$ corresponds to an integral symmetric flow in the matching graph $M_G$ whose flow value is twice the matching size. Conversely, any integral symmetric flow in $M_G$ corresponds to a general matching in $G$.
\end{observation}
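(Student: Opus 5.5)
We prove the two directions of \cref{obs:matching-graph} by explicit constructions, using only \cref{def:matching-graph} and \cref{def:is-flow}. Write $V'=\{v':v\in V\}$ for the primed copies. Every edge of $M_G$ has capacity $1$, so an integral flow is a $0/1$ vector on edges. Also, every edge of $M_G$ goes from $\{s\}\cup V$ to $V'\cup\{s'\}$ or is one of the source edges, so the graph is layered as $s\to V\to V'\to s'$.

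\textbf{Matching to flow.} Let $N\subseteq E$ be a matching. For each $\{u,v\}\in N$, route one unit along $s\to u\to v'\to s'$ and one unit along $s\to v\to u'\to s'$. The edges used are $(s,u),(u,v'),(v',s')$ and $(s,v),(v,u'),(u',s')$, and each pair $(s,u),(u',s')$ and $(u,v'),(v,u')$ consists of mutually symmetric edges. So the flow is symmetric in the sense of \cref{def:skew-symmetric-graph}. Each vertex $u$ is matched at most once, so the source edge $(s,u)$ is used at most once and capacities hold. Conservation holds at $u$, where one unit enters from $s$ and one leaves to $v'$, and likewise at $v'$. The flow value is $2|N|$.

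\textbf{Flow to matching.} Let $f$ be an integral symmetric flow. The only edge entering $u\in V$ is $(s,u)$, which has capacity $1$. By conservation, at most one edge $(u,v')$ leaving $u$ carries flow. Symmetrically, the only edge leaving $v'$ is $(v',s')$, so at most one edge $(u,v')$ entering $v'$ carries flow. Define $N=\{\{u,v\}: f_{(u,v')}=1\}$. Each such $\{u,v\}$ is an edge of $G$, since $(u,v')$ exists only when $(u,v)\in E$. By symmetry, $f_{(u,v')}=1$ if and only if $f_{(v,u')}=1$, so each pair contributes exactly two flow-carrying middle edges.

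\textbf{The degree constraint.} This is the one step that needs care: we must show each vertex $u$ lies in at most one pair of $N$. Vertex $u$ can appear in a pair in two ways: as a tail, through an edge $(u,v')$ leaving $u$, or as a head, through an edge $(w,u')$ entering $u'$. At most one edge of each type carries flow, by the two conservation arguments above. Symmetry maps $(u,v')$ to $(v,u')$, so the pairs through $u$ of the first kind are exactly the pairs through $u$ of the second kind. Hence $u$ is in at most one pair, and $N$ is a matching.

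\textbf{Flow value.} The value of $f$ equals the flow on the middle layer, which is $2|N|$. So the correspondence preserves value up to the factor $2$ in both directions. It follows that a maximum integral symmetric flow in $M_G$ yields a maximum matching of $G$.
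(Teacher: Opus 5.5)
Your proof is correct and follows the paper's approach: the forward direction uses the same pair of paths $s\to u\to v'\to s'$ and $s\to v\to u'\to s'$ for each matching edge. For the converse, you replace the paper's one-line appeal to decomposing the flow into such symmetric path pairs with an equivalent edge-level argument (in-degree one at $u$, out-degree one at $v'$, plus symmetry). This makes explicit the degree and value checks that the paper leaves implicit.
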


\begin{proof}
        For every matching edge $(u,v)$ in a matching, we can map it to the following pair of integral symmetric paths in the matching graph $M_G$: $s\to u\to v'\to s'$ and $s\to v\to u'\to s'$. For the other direction, we can always decompose an integral symmetric flow in $M_G$ into such pairs of paths, which gives a matching in $G$.
\end{proof}

\subsection{Basic (Skew-Symmetric) Flow Theory}
\cite{GK96,GK04} introduced a theory of integral skew-symmetric flows, defining regular paths and barriers on skew-symmetric graphs as analogous objects for augmenting paths and cuts in standard flows. We focus on the unit capacity case for simplicity, as our matching graph $M_G$ satisfies this condition.

\begin{definition}\label{def:regular-path}
    A path from $s$ to $s'$ is called a \emph{regular path} if and only if it does not contain both an edge $e$ and its symmetric pair $e'$ simultaneously for any edge $e\in E$.
\end{definition}

\begin{figure}[t]
        \centering
        \includegraphics[width=0.5\textwidth]{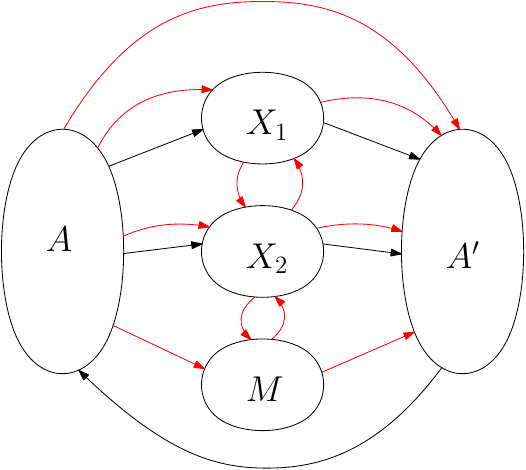}
        \caption{Example of a barrier in a skew-symmetric graph $H$. Black edges represent edges already in $H$. Red edges represent edges that do not exist in $H$, but can be added to $H$ to bypass the barrier.}
        \label{fig:cut-sampling}
\end{figure}

\begin{definition}\label{def:barrier}
    A \emph{barrier} from $s$ to $s'$ is a partition $(A,A',X_1,\ldots,X_k,M)$ of the vertex set $V$, where $A$ and $A'$ are symmetric sets and each fragment $X_i$ is self-symmetric, such that
    \begin{itemize}
        \item $s\in A$ (and symmetrically $s'\in A'$), and
        \item there are no edges from $A$ to $A'$, or from $A$ to $M$, and
        \item there are no edges between $X_i$ and $X_j$ for $i\ne j$, or between $X_i$ and $M$, and
        \item for every $i$, there is exactly one edge from $A$ to $X_i$.
    \end{itemize}
\end{definition}

\begin{theorem}\label{thm:regular-path-barrier}
    A regular path from $s$ to $s'$ exists if and only if no barrier from $s$ to $s'$ exists.
\end{theorem}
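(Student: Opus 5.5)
The proof has two directions of different difficulty; I treat them separately, working in the unit-capacity setting fixed above.

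\emph{Easy direction: a barrier excludes a regular path.} Suppose a barrier $(A,A',X_1,\ldots,X_k,M)$ exists and $P$ is a regular path from $s$ to $s'$. Since $s\in A$ and $s'\in A'$, the path $P$ must leave $A$ at some point. It cannot enter $A'$ or $M$ directly from $A$, so its first exit from $A$ uses the unique edge $e_i$ from $A$ into some fragment $X_i$. From $X_i$ the only edges lead back into $X_i$, into $A$, or into $A'$, because there are no edges from $X_i$ to $M$ or to $X_j$ with $j\neq i$. By skew-symmetry, the only edge from $X_i$ to $A'$ is $e_i'$, since $X_i$ is self-symmetric.

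The plan is to show by induction over the sequence of visits that $P$ can never reach $A'$. Every transition from $A$ into a fragment uses some $e_j$. The only way to move from $X_j$ to $A'$ is $e_j'$, and using $e_j'$ after $e_j$ violates regularity. Concretely, take the first time $P$ enters $A'\cup M$. Edges into $M$ come only from $M$ or $A'$, because there are none from $A$ or from the fragments, and by symmetry none from $M$ into $A'$... The step into $A'\cup M$ must therefore come from some $X_j$ via $e_j'$. Then $P$ must have previously entered $X_j$. The only entrance into $X_j$ from outside $X_j$ that is available before this point is $e_j$, since the edges into $X_j$ come only from $A$ (via $e_j$) and from $A'$ (symmetric to edges $X_j\to A$), and $A'$ has not been visited yet. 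So $P$ contains both $e_j$ and $e_j'$, a contradiction. Some care is needed for multiple visits, but the same "first time in $A'\cup M$" argument handles it.

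\emph{Hard direction: no regular path yields a barrier.} I would follow Goldberg–Karzanov's regular-reachability search, a skew-symmetric analogue of Edmonds' blossom algorithm. The steps are:

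\begin{enumerate}
\item Run a search from $s$ that grows a set of regularly reachable vertices, and shrink ``blossoms'' whenever an edge $(u,v)$ is found with $u$ reachable and $v'$ reachable on a compatible path. A blossom is a self-symmetric set that can be entered and exited regularly.
\item Assume the search terminates without reaching $s'$. Let $A$ be the set of vertices regularly reachable from $s$ that lie outside maximal self-symmetric fragments.
\item Let the $X_i$ be the maximal self-symmetric structures, namely blossoms, each entered by exactly one edge from $A$. This gives the ``exactly one edge'' property, because a second entering edge $e$ with $e$ not equal to $e_i$ would let one exit via $e'$ and create a regular $s$–$s'$ path.
\item Let $M$ be the remainder.
\end{enumerate}

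One then verifies the barrier conditions. There are no edges $A\to A'$, since such an edge would extend to a regular path by symmetry. There are no edges $A\to M$, because $A$ is closed under the search. There are no edges between distinct fragments, or from fragments to $M$, by maximality of the fragments.

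\emph{Main obstacle.} The main obstacle is the correctness of the blossom-shrinking step: one must show that regular reachability is preserved under contraction of a blossom and its inverse expansion. This is exactly the content of Goldberg–Karzanov's analysis. I would not reprove it in full; instead I would cite \cite{GK96,GK04} for this theorem. Alternatively, in the matching case one can derive it from the Tutte–Berge formula via \cref{obs:matching-graph}, identifying the $X_i$ with odd components and $M$ with the rest.
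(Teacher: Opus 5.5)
\textbf{Overall.} Your direction ``barrier $\Rightarrow$ no regular path'' is correct and complete. Look at the first vertex of $P$ lying in $A'\cup M$. The edge entering it must be $e_j'$ out of some $X_j$. The last entry of $P$ into $X_j$ before that point can only be $e_j$, since $A'$ has not yet been visited. So $P$ contains both $e_j$ and $e_j'$. Your unfinished remark about edges into $M$ is not needed.

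The paper itself gives no proof of this theorem. It imports it from \cite{GK96}, and the appendix only sketches the same search-and-contract procedure you outline. So for the converse you are in the same position as the paper: the citation, not your sketch, carries the argument.

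\textbf{The gap in your sketch.} This matters because the checks you give for the barrier conditions are not valid as stated.

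An edge $(u,v')$ with $u,v$ regularly reachable does \emph{not} extend to a regular $s$--$s'$ path ``by symmetry''. The walk $P_u\circ(u,v')\circ P_v'$ contains both $b$ and $b'$ whenever $P_u$ and $P_v$ share an edge $b$, such as a common prefix out of $s$. For example, take edges $s\to x$, $x\to u$, $x\to v$, $u\to v'$ and their mates $x'\to s'$, $u'\to x'$, $v'\to x'$, $v\to u'$. Both $u$ and $v$ are regularly reachable, yet every $s$--$s'$ path uses $(s,x)$ and $(x',s')$. The barrier is $A=\{s\}$, $X_1=\{x,u,v,x',u',v'\}$, $M=\emptyset$, so the endpoints of the offending edge end up inside a fragment, not in $A$. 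This is precisely the blossom case; the appendix notes that $P_1\circ P_2'$ is either regular or yields a regular $w$--$w'$ path to contract.

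Two of your other checks fail in the same way:
\begin{itemize}
\item Your reason for ``exactly one edge from $A$ into $X_i$'' fails for the same reason.
\item Edges from fragments to $M$ are excluded by closure of the search, not by maximality.
\end{itemize}
All of these conditions hold only as termination invariants of the search. Every such edge was scanned and either produced a path, grew $A$, or triggered a contraction that moved its endpoints out of $A$. Showing that this process is sound is the Goldberg--Karzanov theorem itself.

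\textbf{The Tutte--Berge alternative.} This route is not a substitute. It needs Berge's lemma, and the Tutte set $U$ must be placed as $A=\{s\}\cup U'$, with only (copies of) even components in $M$. More importantly, it covers only residual graphs of $M_G$. The theorem is needed for arbitrary skew-symmetric graphs, such as the subgraphs $H$ on which the paper runs the search.
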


\begin{theorem}\label{thm:augmenting-regular-path}
        Given an skew-symmetric graph $G$ with edge capacity $1$ and an integral symmetric flow $f$ on $G$,
    \begin{itemize}
        \item The flow $f$ is maximum if and only if there is no regular path from $s$ to $s'$ in this residual graph.
        \item If there is a regular path from $s$ to $s'$, augmenting one unit along it and its symmetric path gives an integral symmetric flow of larger value.
    \end{itemize}
\end{theorem}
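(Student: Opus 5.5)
The plan follows the standard reduction of Goldberg and Karzanov~\cite{GK96,GK04} and works with the residual graph $G_f$. Since $f$ is symmetric, $G_f$ is again skew-symmetric: an edge $e$ has residual capacity $1-f_e$ forward and $f_e$ backward. Because $f_e=f_{e'}$, the residual copy of $e$ exists exactly when the residual copy of $e'$ exists, and $\sigma$ maps reversed edges to reversed edges. I would prove the second bullet first and then derive the first.

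\textbf{Second bullet.} Let $P$ be a regular $s$--$s'$ path in $G_f$ and augment one unit along both $P$ and $P'$. Regularity of $P$ implies that $P$ and $P'$ share no residual edge. Indeed, if a residual edge $e$ lay on both, then $e'$ would also lie on $P$, since $P'$ consists of the symmetric edges of $P$; this contradicts regularity. Hence each residual edge is used at most once. With unit capacities this keeps $f$ feasible and integral.

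Symmetry is preserved because the change on $e$ (from $P$) is mirrored by the identical change on $e'$ (from $P'$). Conservation holds at every vertex other than $s$ and $s'$, since both $P$ and $P'$ are $s$--$s'$ paths. The value increases by $2$, as one unit is pushed along each of the two paths. There is one subtlety: $P$ could use a forward residual arc of $e$ while $P'$ uses the backward residual arc of the same underlying edge. I would argue that these cancel, which only helps feasibility. The resulting flow decomposes into $s$--$s'$ paths plus cycles, so the net value still rises by $2$.

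\textbf{First bullet, easy direction.} If a regular path exists, the second bullet produces a strictly larger symmetric flow, so $f$ is not maximum.

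\textbf{First bullet, converse.} Suppose no regular path exists in $G_f$. By \cref{thm:regular-path-barrier}, $G_f$ then has a barrier $(A,A',X_1,\ldots,X_k,M)$. The aim is to show this barrier bounds every integral symmetric flow by $\mathrm{val}(f)$.

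The key step is a capacity bound for any integral symmetric flow $g$ in $G$. Its value is at most $c(A\to V\setminus A)-k$ minus flow returning to $A$, because each fragment $X_i$ is self-symmetric and receives exactly one edge from $A$. More precisely, the symmetric flow entering $X_i$ must leave it toward $A'$, and by symmetry and parity the single-entry structure forces the net contribution through $X_i$ to be even. This loses one unit of capacity per fragment compared to an ordinary cut. This is the Goldberg--Karzanov barrier bound:
\[
\mathrm{val}(g)\le c(\delta^+(A)) - \#\{i:\ |\delta(A,X_i)| \text{ odd}\}.
\]

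Next, apply this bound to $g-f$, which is a symmetric circulation-like object in $G_f$. Since the barrier lives in $G_f$, it has no residual edges from $A$ to $A'\cup M$ and exactly one residual edge into each $X_i$. The odd-parity argument then shows that no symmetric $g-f$ can route positive value from $s$ to $s'$, so $\mathrm{val}(g)\le \mathrm{val}(f)$.

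\textbf{Main obstacle.} The hardest step is the parity argument inside the fragments. One must show that a single residual entry edge from $A$ into a self-symmetric $X_i$ cannot carry net symmetric integral flow. The mirrored exit edge into $A'$ is the symmetric image of the entry, and an integral symmetric flow using both would contain an edge and its mirror, which is exactly the non-regular pattern. I would first try a symmetric path decomposition of $g-f$ in $G_f$: decompose it into pairs of symmetric paths and show that every $s$--$s'$ path in such a decomposition must be non-regular through some $X_i$, then derive a contradiction with integrality and symmetry.
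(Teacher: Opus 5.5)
The paper does not prove this theorem; it is quoted from Goldberg--Karzanov~\cite{GK96,GK04}, so your proposal has to stand on its own.

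Your second bullet and the easy direction are fine. The forward/backward ``subtlety'' cannot even occur: with unit capacities and integral $f$, $P$ using the forward arc of $e$ needs $f_e=0$. But $P'$ using the reverse arc of $e$ means $P$ uses the reverse arc of $e'$, which needs $f_{e'}=f_e=1$.

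Your plan for the converse is also the right one. $h=g-f$ is a unit-capacity integral symmetric flow in $G_f$ of value $\mathrm{val}(g)-\mathrm{val}(f)$, and the barrier of $G_f$ should bound it by $k-k=0$. Apply the bound to $h$ in $G_f$, not to $g$ in $G$. In $G$ the barrier structure is lost, since saturated arcs from $A$ to $A'$, $M$, or $X_i$ are allowed there.

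The genuine gap is the parity step. You correctly flag it as the crux but only assert it, and the mechanism you propose for it fails.
\begin{itemize}
\item A symmetric flow always has $h_e=h_{e'}$, so ``using both $e_i$ and $e_i'$'' is not forbidden. Regularity constrains paths, not symmetric flows. For example, $h$ can enter $X_i$ on $e_i$ and return to $A$, with the mirrored flow entering from $A'$ and leaving on $e_i'$.
\item A decomposition of $h$ into symmetric path pairs with non-regular paths contradicts neither integrality nor symmetry. To get a contradiction you would need Goldberg--Karzanov's theorem that the decomposition can be chosen regular. That theorem would directly give a regular $s$--$s'$ path in $G_f$, making the barrier superfluous, and it is no easier than the statement itself.
\end{itemize}

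The missing idea is an elementary counting lemma: if $h$ is an integral symmetric flow and $Y$ is a self-symmetric vertex set with $s,s'\notin Y$, then the total flow entering $Y$ is even.
\begin{itemize}
\item Start from $\sum_{v\in Y}h(\delta^{\mathrm{in}}(v))=h(\delta^{\mathrm{in}}(Y))+h(E[Y])$.
\item The left side is even. $Y$ splits into pairs $\{v,v'\}$, and $h(\delta^{\mathrm{in}}(v'))=h(\delta^{\mathrm{out}}(v))=h(\delta^{\mathrm{in}}(v))$: the arcs entering $v'$ are the mates of the arcs leaving $v$, and then conservation applies.
\item $h(E[Y])$ is even, because arcs inside $Y$ pair up as $\{a,a'\}$ with $a\neq a'$ and $h_a=h_{a'}$.
\end{itemize}

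Now write $h(X_i,A)$ for the flow on arcs from $X_i$ to $A$, and take $Y=X_i$ in $G_f$.
\begin{itemize}
\item The only arcs entering $X_i$ are $e_i$ and arcs from $A'$. The latter are exactly the mates of arcs from $X_i$ to $A$.
\item Hence $h(e_i)+h(X_i,A)$ is even. Since $h(e_i)\le 1$, this gives $h(e_i)-h(X_i,A)\le 0$.
\item The only arcs leaving $A$ in $G_f$ are $e_1,\dots,e_k$. Therefore $\mathrm{val}(h)\le\sum_i\bigl(h(e_i)-h(X_i,A)\bigr)\le 0$, i.e.\ $\mathrm{val}(g)\le\mathrm{val}(f)$.
\end{itemize}
This completes the converse.
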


With these results, we can find maximum integral symmetric flows in any skew-symmetric graph by repeatedly finding regular paths and augmenting along them, until we hit a barrier on the residual graph. Our main result is that, instead of searching the graph in $O(m)$ time, we can find a regular path on the residual graph in amortized $n^{1+o(1)}$ time, which gives an $n^{2+o(1)}$ algorithm for maximum integral symmetric flows on matching graphs, hence yielding an $n^{2+o(1)}$ algorithm for general matchings.

\subsection{Graph Notation}
We now define the notations used in this paper.
Throughout the paper, we use $G$ to denote the skew-symmetric graph in the integral symmetric flow problem (except at the definition of matching graphs). We use $n$ and $m$ to denote the number of vertices and edges. We do not distinguish the size of the original graph and the skew-symmetric graph, as the matching graph construction in \Cref{def:matching-graph} only increases the number of vertices and edges by a constant factor. 

All graphs in this paper are simple and of unit capacity, though our algorithm also works in the same running time for graphs with at most a constant number of parallel edges between two vertices. For skew-symmetric graphs, we follow the simplified notation for symmetric objects in \Cref{sec:is-flows}, where we use $v'$ to denote the symmetric vertex $\sigma(v)$ of $v$, and $e'$ to denote the symmetric edge $\sigma(e)$ of $e$; the symmetric set of a vertex set $S$ is defined as $S'=\{\sigma(v):v\in S\}$, and the symmetric path of a path $P=(e_1,\ldots, e_k)$ is defined as $P'=(e_k',\ldots,e_1')$.
For a vertex subset $S$, we define the cut $\partial S$ as all edges between $S$ and $V\setminus S$, including both directions for directed graphs.

Throughout our algorithm, we maintain symmetric weights $w_e$ on all edges (including their corresponding reverse edges that may appear during augmentation) in the skew-symmetric graph $G$. For each pair $(e,e')$ of symmetric edges, we maintain a variable $x_e\in [1,U]$, which defines the weight of these two edges as $w_e=w_{e'}=\frac 1{x_e}$. We use $x_e$ to denote the variable associated with the edge $e$, so $x_e$ and $x_{e'}$ refer to the same variable. 

We call an edge \emph{active} if its capacity is $1$ in the current residual graph, and \emph{inactive} otherwise. The algorithm only maintains weights on active edges, and we reset the variable and weight whenever an edge becomes inactive after an augmentation.

\section{Finding Regular Paths in Residual Graphs}

In this section, we give our algorithm for finding the maximum integral symmetric flow in the residual skew-symmetric graph. This is the main technical result behind \Cref{thm:main}, as the general matching problem reduces to the maximum integral symmetric flow problem on the matching graph, constructed in \Cref{def:matching-graph}. 

\maximumisflow*

The core of this result is an algorithm that finds an integral symmetric flow on the residual graph of value at least half of the remaining flow value, which is stated in the following lemma.

\begin{lemma}\label{lem:finding-regular-path}
        There exists an algorithm that, given a skew-symmetric graph $G$ with unit capacity and a parameter $K$, assuming that the maximum integral symmetric flow of $G$ is at least $K$, finds an integral symmetric flow of value at least $\frac K2$ in $n^{2+o(1)}$ time with high probability.
\end{lemma}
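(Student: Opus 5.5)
The plan is to augment along regular paths one at a time, $K/2$ times. Each path is found by a sublinear search that alternates between searching a sampled subgraph $H$ of the current residual graph and sampling new edges into $H$. The target is $n^{1+o(1)}\cdot\frac{n}{K}$ expected time per path, which totals $n^{2+o(1)}$ over the $K/2$ iterations. During these iterations the residual graph always has maximum integral symmetric flow at least $K/2$, so by \Cref{thm:augmenting-regular-path} a regular path always exists. Augmenting along it and its symmetric path (\Cref{thm:augmenting-regular-path}) changes only $O(n)$ edges, and these are updated in the sampling structure.

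\textbf{Sampling structure.} On the underlying undirected residual graph I maintain a dynamic cut-sampling data structure. Given a set $S$ (presented implicitly via insertions), it returns an edge of $\partial S$ with probability proportional to $w_e = 1/x_e$, in $n^{o(1)}$ time per operation. I also keep a weight update rule: when a sampled edge turns out to be useless, its $x_e$ is multiplied by $(1+\epsilon)$, and when it is useful, its $x_e$ is divided, always clipped to $[1,U]$ with $U=\mathrm{poly}(n)$. The variable $x_e$ is shared by $e$ and $e'$, so the weights stay symmetric.

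\textbf{Search.} In the sampled graph $H$, I run the regular-path search of \cite{GK96} incrementally. If a regular path in $H$ exists, we are done. Otherwise \Cref{thm:regular-path-barrier} yields a barrier $(A,A',X_1,\dots,X_k,M)$ in $H$. I define charges on residual edges as the adjustment values from \cite{GK96}, ignoring fragments with more than $2n/K$ vertices; there are at most $K/2$ such fragments. The charges are chosen so that an edge with negative charge breaks the barrier: it goes from $A$ into $M\cup A'$, leaves a small fragment, or is a second edge from $A$ into some $X_i$. Such an edge is absent from $H$ but present in $G$. The edges with positive charge are those from the ``wrong side''.

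I then sample from the boundaries of the relevant sets, namely $A$ and the small fragments, whose total boundary is controlled. A negative-charge sample is added to $H$ and its weight raised. A positive-charge sample has its weight lowered.

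\textbf{Potential argument.} The main step is a potential argument in the spirit of the stochastic cut-toggling version of \cite{Li26}. Let $\Phi = \sum_e \log x_e$, measured against a fixed reference symmetric flow $f^*$ of value at least $K/2$ in the residual graph. Since the flow value exceeds $K/2$, the barrier in $H$ is not a barrier in $G$. Counting the flow paths of $f^*$ crossing it shows that the charged weight of bypassing edges is at least a $\Omega(K/n)$ fraction of the boundary weight. The large fragments are exactly what would break this count, and at most $K/2$ flow paths are lost to them, which is why they are excluded.

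Hence each sample is bypassing with probability $\Omega(K/n)$, or else it is a positive-charge edge whose weight decrease shrinks a potential bounded by $O(m\log U)$ overall. Amortizing, the expected number of samples per augmentation is $\tilde O(n\cdot n/K)\cdot n^{o(1)}$, the search in $H$ costs time near-linear in $|H|$, and a union bound over $\mathrm{poly}(n)$ steps gives high probability.

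\textbf{Main obstacle.} I expect the main obstacle to be proving the $\Omega(K/n)$ charge inequality for barriers, as opposed to cuts. This requires showing that the symmetric flow $f^*$ decomposes into regular paths each of which must use a negative-charge edge, or pass through a large fragment, with the charges from \cite{GK96} summing correctly along each path. It also requires checking that the weight updates remain symmetric and consistent across augmentations when edges become inactive and are reset.
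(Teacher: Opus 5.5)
Your overall architecture matches the paper's: a sampled subgraph $H$, resumable regular reachability, barrier charges that ignore large fragments, and stochastic toggling. There is, however, a genuine gap in the core probabilistic step.

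\textbf{The per-sample bypass claim.} Your argument rests on the claim that counting flow paths of $f^*$ shows bypassing edges carry an $\Omega(K/n)$ fraction of the charged weight, so every sample bypasses with probability $\Omega(K/n)$. This does not follow, and it is false in general. A decomposition of $f^*$ tells you which edges carry negative charge, but nothing about their current weights $1/x_e$, which the algorithm changes adaptively. Even with all $x_e=1$, nothing prevents a barrier where $\Theta(n^2)$ positively charged edges (say from $M$ into $A$) face only $O(K)$ negatively charged ones, giving bypass probability $O(K/n^2)$. The only combinatorial fact available is \Cref{lem:regular-path-charge}: $\sum_e f_e d_e<-2|f|+K$, which is merely negative once $|f|\ge K/2$.

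The paper never claims per-step balance. It bounds the number of toggles globally, not per augmentation, using the energy $\Phi=\sum(\ln 2U-\ln x_e)$. There are $O(n)$ negative samples per augmentation, and each positive sample lowers $\Phi$ by $\ln\frac32$. This works only if no $x_e$ ever exceeds $U$, which is why the algorithm declares failure at $U$ rather than clipping. With your clipping to $[1,U]$, positive samples on saturated edges no longer decrease any potential, and the count is lost.

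\textbf{The missing idea.} What you need is a proof that $x_e\le U$ throughout with high probability (\Cref{lem:preserve-max-flow}); this is where the charge inequality is actually used. The paper samples with probability $\propto|d_e|/x_e$ and changes $x_e$ by exactly $\pm x_e/2$ (factors $\frac12,\frac32$), so the expected change of each $x_e$ is exactly $d_e/(2Z)$. Applying \Cref{lem:regular-path-charge} to $f^*-f$ then makes the \emph{linear} sum $\sum_{e\in\operatorname{supp}(f^*-f)}x_e$, together with $\Phi$ and bookkeeping terms, a supermartingale. Optional stopping and Markov keep support edges below $B=\sqrt U/n$. This forces $Z\ge 1/B$, which caps every $x_e$'s expected growth per toggle, and a second supermartingale gives failure probability $O(n^3\log U/\sqrt U)$.

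Your $\sum\log x_e$ potential and unspecified ``divide'' step do not give this exact linear drift. An asymmetric rule such as $\times(1+\epsilon)$ versus $\div(1+\epsilon)$ can drift upward on the support even when $\sum d_e<0$.

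\textbf{Smaller issues.} You have the role of large fragments backwards. Keeping their boundary terms would only strengthen the charge inequality. They are dropped, at a cost of less than $K$ in the sum, for efficiency: edges inside a fragment would otherwise get a spurious nonzero charge in the height representation, so edges inside small fragments must be deleted from the auxiliary sampling graph. That costs $O(n\cdot\frac{2n}K)$ per path, and the same would be unaffordable for large fragments. Your sketch is also missing the reduction from $|d_e|w_e$-sampling to undirected cut sampling (\Cref{lem:edge-charge-to-vertex-height} and \Cref{lem:vertex-height-to-cut-sampling}).
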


Before stating our algorithm for \Cref{lem:finding-regular-path}, we first prove that it implies \Cref{thm:maximum-is-flow}, and thus \Cref{thm:main}.

\begin{proof}[Proof of \Cref{thm:maximum-is-flow}]
        We maintain an integral symmetric flow $f$ on the skew-symmetric graph $G$, and try to augment it in rounds. In each round, we try $K=2^i$ for each $i$ in increasing order, and run the algorithm in \Cref{lem:finding-regular-path} with the flow $f$ before this round. If some run of $K=2^j$ failed to get the desired flow in time, we augment $f$ with the flow returned by the last successful run, and proceed to the next round. 
        Assuming that all runs are correct, when the algorithm fails to compute an integral symmetric flow of value $\frac K2$, it means the remaining flow value is less than $K$. Since the last successful run augments $f$ by at least $\frac K4$, the algorithm reduces the remaining flow value by a constant factor in each round. By repeating for $O(\log n)$ rounds, we get the maximum integral symmetric flow in $G$.
\end{proof}

\subsection{Algorithm Overview}

We start by giving a high-level overview of our algorithm. Our algorithm maintains an integral symmetric flow $f$ and tries to augment $f$ by finding a regular path from $s$ to $s'$ in the residual graph, and then augmenting along it and its symmetric path. To avoid visiting all $m$ edges to find a regular path, we run a regular reachability algorithm on a subgraph $H$ of the residual graph. When the algorithm gets stuck in the subgraph, i.e., it finds a barrier, we add a pair of symmetric edges from the residual graph to $H$ to bypass this barrier, and continue the search. Each included edge pair helps the algorithm reach new vertices using regular paths, so we only add $O(n)$ edges to the subgraph $H$ before we find a regular path. 

When the regular reachability algorithm finds a barrier in the subgraph $H$, we try to find an edge in the residual graph that can bypass the barrier, as indicated by the red edges in \Cref{fig:cut-sampling}, and add it to the subgraph. We find such edges by sampling. For the barrier we encountered, we define a charge $d_e$ on every edge $e$ in the residual graph, such that we can efficiently sample an edge with non-zero charge with probability proportional to its weight $w_e$ times the absolute value of its charge $|d_e|$. Moreover, any negatively charged edge can bypass the barrier, and such an edge exists if the remaining flow is large enough. Finally, negatively charged edges can be sampled with constant probability if we have a ``balanced'' weighting of the residual graph, like the flow value of an integral symmetric flow. If the edge weights are balanced, we can find an edge that can bypass the barrier with high probability in few rounds of sampling. Otherwise, if the edge weights are unbalanced, we can modify the edge weights so they become more balanced. The detailed process of sampling and modifying weights is described in the next subsection, before we give the full algorithm.

\subsection{Barrier toggle operations}
Suppose that the algorithm finds a barrier in the subgraph $H$. Recall that a barrier is a partition $(A, A', X_1, \ldots, X_k, M)$ of the vertex set in the skew-symmetric graph such that $s\in A$, and
\begin{itemize}
        \item $X_1,\ldots, X_k, M$ are self-symmetric sets, and $A'$ is the symmetric set of $A$, and
        \item there are no edges from $A$ to $A'$ or from $A$ to $M$, and
        \item there are no edges between two different fragments $X_i$ and $X_j$, or between $X_i$ and $M$, and
        \item there is exactly one edge from $A$ to each fragment $X_i$.
\end{itemize}

And for this barrier, we can add one of the following edges (see \Cref{fig:cut-sampling} for an illustration) to $H$ to bypass the barrier:

\begin{itemize}
        \item edges from $A$ to a fragment $X_i$, excluding the existing one in the barrier, or
        \item edges between two different fragments $X_i$ and $X_j$, or
        \item edges from $A$ to $A'$, or
        \item edges from $A$ or $X_i$ to $M$.
\end{itemize}

\begin{figure}[t]
        \centering
        \includegraphics[width=0.5\textwidth]{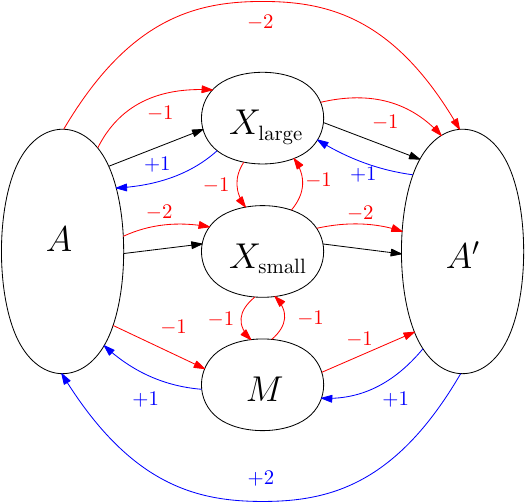}
        \caption{Illustration of the operation vector for a barrier. $X_{\textup{small}}$ is a fragment of size $\le \frac{2n}K$, and $X_{\textup{large}}$ is a fragment of size $>\frac {2n}K$. All edges with non-zero charge are drawn with edge weights showing their charge. Barrier edges (black edges) have charge zero.}
        \label{fig:cut-sampling-weight}
\end{figure}

For this barrier, we define an \emph{operation vector} $\overrightarrow{d}$ for toggling this barrier. It assigns a charge $d_e$ to every active symmetric edge pair $(e,e')$, defined as follows.

\begin{definition}\label{def:operation-vector}
        Consider a barrier $(A, A', X_1, \ldots, X_k, M,)$ and a parameter $K$. Let $e_i$ be the edge from $A$ to $X_i$ in the barrier. Let $\phi(u)$ be an auxiliary vertex potential such that
        \[\phi(u)=\mathds{1}[u\in A']-\mathds{1}[u\in A].\]
        We define the charge on an active edge $e=(u,v)$ as
        \[d_e=\phi(u)-\phi(v)+\sum_{i}\mathds 1\left[e=e_i\lor e=e_i'\right]-\sum_{X_i: |X_i|\le \frac {2n}K}\mathds 1\left[e\in \partial X_i\setminus \{e_i,e_i'\}\right].\]
        The operation vector $\overrightarrow{d}$ is the vector of charges $d_e$ for all active edges $e$ in the graph.
\end{definition}

This edge charge has the following properties. First, charges $d_e$ are symmetric and in the range $\{-2, -1, 0, 1, 2\}$. Second, negatively charged edges can be added to the subgraph $H$ to bypass the barrier. These properties are illustrated in \Cref{fig:cut-sampling-weight}. 
The parameter $K$ is the same as in \Cref{lem:finding-regular-path}, which is a lower bound on the remaining flow before any augmentation. We exclude large fragments with $|X_i|>\frac{2n} K$ to support sampling an edge with probability proportional to $|d_e|w_e$ efficiently, as we will show in \Cref{sec:sampling}. 

Suppose that we can sample an edge $e$ with probability proportional to $|d_e|w_e$ efficiently. If negatively charged edges constitute a constant fraction of the total weight of edges with non-zero charges, then we can find a negatively charged edge with high probability in a few rounds of sampling, which can be added to the subgraph $H$ to bypass the barrier. Otherwise, the weight is unbalanced, since the total weight of positively charged edges is much larger than the total weight of negatively charged edges. In this case, we can use the following barrier toggle operation to modify the edge weights. The operation below is stated in a ``stochastic'' form for efficiency. In expectation, it shifts the variable $x_e$ of each edge by a value proportional to its charge $d_e$, so negatively charged edges now have a smaller variable $x_e$ and thus larger weight $w_e$, while positively charged edges have a larger variable $x_e$ and thus smaller weight $w_e$. 

\begin{definition}[Barrier toggle operations]\label{def:barrier-toggle}
        For an operation vector $\overrightarrow{d}$ given by a barrier, we define the operation of toggling this barrier as follows: Sample an edge $e$ with probability proportional to $|d_e|w_e$. If the sampled edge has negative charge ($d_e<0$), we multiply $x_e$ by $\frac 12$. If the sampled edge has positive charge ($d_e>0$), we multiply $x_e$ by $\frac 32$.
\end{definition}

\subsection{The full algorithm}
Using the edge weights and barrier toggle operations defined above, we now describe the full algorithm for \Cref{lem:finding-regular-path}, which finds an integral symmetric flow of value $\frac K2$ on the residual graph, assuming that the remaining integral symmetric flow has value at least $K$. 

\paragraph{Algorithm.} Initialize $x_e=1$ for every edge pair $(e,e')$ in $G$. Repeat the following steps to find regular paths in the residual graph of $G$ until we increase the flow value by at least $\frac K2$, or the algorithm declares a failure.

\begin{enumerate}
        \item Initialize the subgraph $H$ to contain no edges.
        \item Run the regular reachability algorithm on $H$ until it finds a barrier or a regular path.
        \item If a barrier is found, apply the barrier toggle operation defined in \Cref{def:barrier-toggle} on this barrier to modify the edge weights.
        \item If the sampled variable $x_e$ drops below $1$, clip it back to $1$. If the sampled variable $x_e$ exceeds $U$, declare failure and terminate the algorithm. 
        \item If the sampled edge has a negative charge, add it and its symmetric edge to $H$ and resume the search from step 2. Otherwise, repeat this operation starting from step 3.
        \item If a regular path is found, augment along it and its symmetric path to update the flow. Remove all edges that become inactive from supporting data structures, then add all edges that become active to these data structures while setting their variable $x_e=1$. 
\end{enumerate}

For regular reachability algorithms, we use the algorithm in \cite{GK96}. This algorithm naturally supports resuming after adding edges to the subgraph, which we will discuss in~\Cref{app:resumable-rr} in more detail. The edge weight upper bound $U$ is polynomial in $n$, and we will choose it later in \Cref{sec:edge-variable-bound} so that the algorithm succeeds with high probability.

The remaining part of this section is to analyze the correctness of this algorithm, which completes the proof of \Cref{lem:finding-regular-path}. The next subsection bounds the total number of barrier toggle operations by $\tilde O(n^2)$, and the subsection after that shows that the algorithm succeeds with high probability. Finally, we show that the sampling step in the barrier toggle operation can be implemented efficiently, so the algorithm runs in $n^{2+o(1)}$ time.

\subsection{Energy bound of barrier toggle operations}
In this subsection, we show that the total number of barrier toggle operations is bounded by $\tilde O(n^2)$ for polynomially bounded $U$.
In each operation, we either sample a negatively charged edge, which allows the algorithm to reach new vertices using regular paths, or sample a positively charged edge, which multiplies the variable $x_e$ by $\frac 32$. The former case only occurs $O(n)$ times for finding a regular path, and $O(n^2)$ times for the whole algorithm. Assuming that $U$ is polynomially bounded, the latter case also only occurs $\tilde O(n^2)$ times before some variable $x_e$ exceeds $U$. 

Formally, we define the energy $\Phi$ as follows:
\[\Phi = \sum_{\text{active }(e,e')} \int_{x_e}^{2U}\frac 1t\ \text{d}t=\sum_{\text{active }(e,e')} \left(\ln (2U) - \ln x_e\right).\]

We first claim that the energy $\Phi$ is always non-negative. As long as the algorithm does not declare failure, we have $x_e\in [1,U]$. Since we multiply $x_e$ by at most $\frac 32$ in each barrier toggle operation, we have $x_e\le 2U$ after any operation, so $\Phi\ge 0$. 
Next, we consider the changes in energy $\Phi$ in the algorithm. There are three sources of energy increase: initialization, augmentation, and sampling a negatively charged edge during barrier toggle operations. As each edge contributes at most $\ln (2U)$ to the energy, initialization contributes at most $m\ln (2U)$ to the energy. Each augmentation contributes at most $n\ln (2U)$ to the energy. Sampling a negatively charged edge increases the energy by at most $\ln 2$. As every such edge allows the algorithm to reach new vertices using regular paths, there are at most $O(n)$ such operations in each augmentation. Since there are at most $O(n)$ augmentations, the total energy increase from all sources is $O(n^2\log (2U))$.
Whenever we sample a positively charged edge, the energy decreases by $\ln \frac 32$. Therefore, the total number of barrier toggle operations that sample a positively charged edge is at most $O(n^2\log (2U))$. Combined with the $O(n^2)$ operations that sample a negatively charged edge, the total number of barrier toggle operations is at most $O(n^2\log (2U))$.

\subsection{Preservation of maximum integral symmetric flows}\label{sec:edge-variable-bound}

In this subsection, we prove that, if we set $U$ to be a sufficiently large polynomial in $n$, then with high probability, the variable $x_e$ of any edge will not exceed $U$ before we find an integral symmetric flow of value at least $\frac K2$, assuming that the maximum integral symmetric flow has value at least $K$.

\begin{lemma}\label{lem:preserve-max-flow}
        Let $f^*$ be a maximum integral symmetric flow in $G$. Suppose that $|f^*|\ge K$. With probability at least $1-O\left(\frac{n^3\log (2U)}{\sqrt U}\right)$, all edge variables $x_e$ will not exceed $U$ throughout the algorithm in \Cref{lem:finding-regular-path}.
\end{lemma}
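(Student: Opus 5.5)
We follow the standard potential argument from the cut-toggling analysis of~\cite{Li26}, adapted to barriers and made stochastic. Fix, for the duration of the algorithm, a reference integral symmetric flow in the current residual graph. While fewer than $\frac K2$ units have been augmented, the residual graph still carries an integral symmetric flow $g$ of value at least $\frac K2$. By flow decomposition (as in \Cref{obs:matching-graph}), $g$ is supported on at most $n$ active edge pairs, each with $g_e\in\{0,1\}$. We track the potential
\[\Psi=\sum_{(e,e')\,:\,g_e=1}\sqrt{x_e}.\]
It starts at most $n$ and is reset to a value at most $n$ only at augmentations, when $g$ is re-chosen; there are $O(n)$ augmentations. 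The failure event is that some $x_e$ exceeds $U$. If this edge lies on $g$, then $\Psi\ge\sqrt U$. Off-support edges are handled by a union bound over the choice of $g$, using that $g$ may be chosen to contain any edge usable by some optimal flow. Alternatively, we argue only about support edges, since edges not on any maximum flow are irrelevant to success. We expect to phrase the final bound over support edges and absorb the rest into the $n^3$ factor.

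The key step is showing that $\Psi$ is a supermartingale under barrier toggles. Consider one toggle on a barrier $(A,A',X_1,\dots,X_k,M)$, and let $Z=\sum_e |d_e|w_e$ be the normalizer. The sampled edge $e$ has probability $|d_e|w_e/Z=|d_e|/(x_eZ)$. If $d_e>0$, the change in $\sqrt{x_e}$ is $(\sqrt{3/2}-1)\sqrt{x_e}$. If $d_e<0$, the change is at most $-(1-\sqrt{1/2})\sqrt{x_e}$, and clipping only helps in the sense that the decrease can only shrink to zero when $x_e=1$, which costs $O(1)$. Hence
\[\mathbb E[\Delta\Psi]\le \frac1Z\sum_{g_e=1}\frac{1}{\sqrt{x_e}}\Bigl(0.23\,[d_e>0]\,|d_e|-0.29\,[d_e<0]\,|d_e|\Bigr).\]
Thus we need the ``flow crosses the barrier negatively'' inequality $\sum_{g_e=1}d_e\le -\Omega(1)$, weighted appropriately. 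Compute $\sum_{e\in g}d_e$ via the potential $\phi$. Each $s$–$s'$ path of $g$ contributes $\phi(s)-\phi(s')=-2$. The barrier edges $e_i,e_i'$ add $+1$ each, and small-fragment boundary edges subtract. A symmetric path entering a fragment $X_i$ through $e_i$ must, by regularity and symmetry, leave through a non-barrier boundary edge. For small fragments this cancels the $+1$. Large fragments are at most $K/2$ in number, because there are more than $2n/K$ vertices in each, so the $+1$ charges they receive are at most $K$ in total across the whole flow. With flow value at least $K/2$, we still need $\sum_{g}d_e\le -c$ for a constant $c>0$. To obtain this, we would instead use $K$ as the remaining value, re-choosing thresholds if needed. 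I expect this counting, balancing $-2$ per path against large-fragment barrier edges, to be the main obstacle, and I would redo the constant so that the negative part strictly dominates.

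Because the weights $1/\sqrt{x_e}$ differ across edges, the plain sum inequality does not immediately give a signed weighted sum. So we instead use the per-path structure. Along each flow path, the charges form a sequence summing to a negative value. We would show $\mathbb E[\Delta\Psi]\le 0$ up to an additive error, using that edges with $d_e>0$ on $g$ are exactly barrier edges $e_i$. Each such edge is matched with a negatively charged exit edge on the same path, and the $1/\sqrt{x}$ weights can be compared by the toggling history. If this fails, we settle for $\mathbb E[\Delta\Psi]\le O(1/Z)$ and bound the accumulated drift over the $O(n^2\log(2U))$ toggles from the energy bound. Finally, $\Psi$ is a nonnegative supermartingale between augmentations, with starting value at most $n$. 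Optional stopping / Ville's inequality gives $\Pr[\Psi\text{ ever}\ge\sqrt U]\le n/\sqrt U$. A union bound over the $O(n^2\log(2U))$ phases and sampling steps, where I am generous with the polynomial factor, yields the claimed $1-O\left(\frac{n^3\log(2U)}{\sqrt U}\right)$.
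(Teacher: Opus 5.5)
There are two genuine gaps.

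First, the $\sqrt{x_e}$ potential does not give a supermartingale. Since $e$ is sampled with probability $|d_e|/(x_eZ)$, the drift of $\sqrt{x_e}$ is of order $|d_e|/(Z\sqrt{x_e})$, with different constants for the two signs. So $\sum_g d_e<0$ says nothing about the sign of the drift of $\Psi$.

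Your per-path repair rests on a false claim. The flow $g$ lives in the full residual graph, not in $H$. Besides barrier edges it can use edges $A'\to A$ (charge $+2$), and $M\to A$, $A'\to M$, or $X_i\to A$ with $X_i$ large (charge $+1$). Nothing relates the $x$-values of positive and negative edges on the same path. The fallback $O(1/Z)$ drift bound is useless because $Z$ has no a priori lower bound.

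The missing observation is that the potential should be \emph{linear} in $x_e$. Then $\mathbb{E}[\Delta x_e]=\frac{|d_e|}{x_eZ}\cdot\frac12\operatorname{sgn}(d_e)\,x_e=\frac{d_e}{2Z}$, independent of $x_e$. Hence $\sum_{(e,e')\in\operatorname{supp}(f^*-f)}x_e$ has drift proportional to $\sum d_e<0$ directly by \Cref{lem:regular-path-charge}. Strict negativity suffices, so no $-\Omega(1)$ margin or change of $K$ is needed.

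Clipping removes the negative drift of edges near $x_e=1$ rather than helping. The paper absorbs it by adding $\Phi-N$ to the potential: clipping $x_e<1$ raises the sum by $1-x_e\le-\ln x_e$, which is exactly the drop in $\Phi$.

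You also need a \emph{fixed} $f^*$. Re-choosing $g$ after an augmentation resets no $x_e$, so $\Psi$ is not reset to at most $n$. With $f^*$ fixed, $f^*-f$ changes on only $O(n)$ pairs per augmentation, all with fresh $x_e=1$.

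Second, you never control edges off the support, and they do matter. The algorithm declares failure when \emph{any} sampled $x_e$ exceeds $U$. A positively charged edge lying on no flow of value at least $\frac K2$ (so no choice of $g$ covers it) is multiplied by $\frac32$ every time it is sampled.

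The paper handles these edges in a second stage:
\begin{enumerate}
\item Markov on the first supermartingale shows that all support edges stay at most $B$, except with probability $O(n^2\log(2U)/B)$.
\item While this holds, some support edge has negative charge, so $Z\ge 1/B$. Every edge's expected increment per toggle is then $d_e/(2Z)\le B$, plus at most $1$ from clipping.
\item Summing over all $m\le n^2$ edges and the $O(n^2\log(2U))$ toggles from the energy bound, a second supermartingale plus Markov bounds the probability that some edge exceeds $U$ first by $O(Bn^4\log(2U)/U)$.
\item Choosing $B=\sqrt U/n$ balances the two terms to $O(n^3\log(2U)/\sqrt U)$.
\end{enumerate}
So the $\sqrt U$ in the statement comes from this balancing, not from a $\sqrt{x_e}$ potential. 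Your closing union bound over ``phases'' has no basis, since Ville's inequality already covers the whole run.
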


We prove this lemma in two steps. First, we use the lemma below to show that the total change of edges in the current residual flow $f^*-f$ is negative. Since the expected change of each edge variable $x_e$ is proportional to its charge $d_e$, this implies that the variables of edges in the support of $f^*-f$ are bounded with high probability. Next, we show that if the above holds, then the expected increase of each active edge variable $x_e$ is also bounded, which implies that all active edge variables are bounded with high probability.

\begin{lemma}\label{lem:regular-path-charge}
        Let $G$ be the current residual graph, and $\overrightarrow{d}$ be an operation vector based on a barrier on a subgraph of $G$. For an integral symmetric flow $f$ in $G$ of value at least $\frac K2$, the total charge on edges in the flow is negative, i.e., $\sum_{e} f_e d_e < 0$.
\end{lemma}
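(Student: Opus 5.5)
The plan is to split the charge sum $\sum_e f_e d_e$ into the potential part $\sum_e f_e(\phi(u)-\phi(v))$ and, for each fragment, a per-fragment contribution. I then bound each piece using flow conservation and a parity property of integral symmetric flows. The goal is
\[\sum_e f_e d_e\le -2|f| + 2\cdot\#\{i: |X_i|>\tfrac{2n}K\}.\]
Both terms can then be compared with $K$.

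\emph{Potential part.} Writing $\mathrm{ex}(v)$ for outflow minus inflow of $f$ at $v$, the standard telescoping gives
\[\sum_{e=(u,v)} f_e(\phi(u)-\phi(v))=\sum_v \phi(v)\,\mathrm{ex}(v).\]
Only $s$ and $s'$ have nonzero excess, namely $\mathrm{ex}(s)=|f|$ and $\mathrm{ex}(s')=-|f|$. Since $s\in A$ and $s'\in A'$, we have $\phi(s)=-1$ and $\phi(s')=1$. This part therefore equals $-2|f|$.

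\emph{Fragment part.} For each fragment $X_i$ the contribution is $f_{e_i}+f_{e_i'}$, minus $f(\partial X_i\setminus\{e_i,e_i'\})$ when $X_i$ is small. Since $f$ is symmetric, $f_{e_i}=f_{e_i'}\in\{0,1\}$, so the contribution is always at most $2$. For small fragments I claim it is at most $0$; this parity step is the crux.

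\emph{Parity claim.} Let $X$ be a self-symmetric set not containing $s,s'$. I will show that the inflow $f(\delta^{\mathrm{in}}(X))$ is even. By symmetry, the edges into $v$ correspond bijectively, with equal flow, to the edges out of $v'$, so $\mathrm{in}(v)=\mathrm{out}(v')$. Conservation gives $\mathrm{out}(v')=\mathrm{in}(v')$, hence $\mathrm{in}(v)=\mathrm{in}(v')$. Since $X$ is self-symmetric and $v\ne v'$, the vertices of $X$ split into pairs $\{v,v'\}$. Hence $\sum_{v\in X}\mathrm{in}(v)$ is even. This sum also equals $f(\delta^{\mathrm{in}}(X))$ plus the total flow on edges with both endpoints in $X$. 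Those internal edges come in distinct symmetric pairs $e\ne e'$ with equal flow, since $\sigma(e)\neq e$ and $X$ is self-symmetric. So the internal total is even, and hence $f(\delta^{\mathrm{in}}(X))$ is even.

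\emph{Applying the claim.} If $f_{e_i}=1$, then $e_i$ enters $X_i$ and parity forces another unit entering $X_i$ on some boundary edge $g\ne e_i$. Note that $g\neq e_i'$, because $e_i'$ leaves $X_i$. By symmetry $g'$ also carries one unit, and $g'$ is an outgoing boundary edge different from $e_i'$. Thus $f(\partial X_i\setminus\{e_i,e_i'\})\ge 2=f_{e_i}+f_{e_i'}$. If $f_{e_i}=0$ the contribution is trivially $\le 0$.

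\emph{Counting large fragments.} Large fragments are disjoint and each has more than $\frac{2n}K$ vertices, so their number is strictly less than $\frac K2$. Combining everything,
\[\sum_e f_ed_e\le -2|f|+2\cdot\#\text{large} < -K+K=0,\]
using $|f|\ge \frac K2$.

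The main obstacle is the parity claim. One must check carefully that it uses only integrality, symmetry and conservation of $f$, not any property of the barrier. One must also check that $e_i$ and $e_i'$ really point into and out of $X_i$ respectively, so that the extra boundary units are counted in $\partial X_i\setminus\{e_i,e_i'\}$.
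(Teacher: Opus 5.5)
Your proof is correct and reaches the same bound as the paper, $\sum_e f_e d_e \le -2|f| + 2\cdot\#\{i : |X_i| > \tfrac{2n}{K}\}$, but it argues the two nontrivial parts differently.

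The paper invokes the flow decomposition theorem of \cite{GK04}, writing $f$ as regular $s$--$s'$ (and $s'$--$s$) paths plus regular cycles. It then argues piece by piece. Each path contributes $\mp 2$ through $\phi$. For a small fragment, a regular path or cycle using one of $e_i, e_i'$ cannot use the other, so it must also cross $\partial X_i\setminus\{e_i,e_i'\}$ to enter or leave $X_i$.

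You never decompose $f$. For the potential term you telescope through vertex excesses. For small fragments you replace path-level regularity with a global parity fact, which is an odd-set-type constraint: an integral symmetric flow sends an even amount into any self-symmetric set avoiding $s,s'$. Combined with symmetry, this gives $f(\partial X_i\setminus\{e_i,e_i'\})\ge 2 = f_{e_i}+f_{e_i'}$ whenever $f_{e_i}=1$.

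What each route buys:
\begin{itemize}
\item The paper's version is shorter once the decomposition is cited, and it matches the intuition of regular paths bypassing the barrier.
\item Yours is self-contained. It does not depend on the nontrivial existence of a decomposition into \emph{regular} pieces.
\item Yours also makes explicit that only integrality, symmetry, conservation, and unit capacity are used; unit capacity is needed for $f_{e_i}\le 1$, and the parity step would not suffice with larger capacities.
\end{itemize}

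The large-fragment count is identical in both proofs.
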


\begin{proof}
        Let the barrier be $(A, A', X_1, \ldots, X_k, M)$. Recall that the charge for an edge $e=(u,v)$ is defined as
        \[d_e=\phi(u)-\phi(v)+\sum_{i}\mathds 1\left[e=e_i\lor e=e_i'\right]-\sum_{X_i: |X_i|\le \frac {2n}K}\mathds 1\left[e\in \partial X_i\setminus \{e_i,e_i'\}\right]\]
        where $\phi(u)=\mathds{1}[u\in A']-\mathds{1}[u\in A]$.
        
        By the flow decomposition theorem of skew-symmetric flows (\cite{GK04}), we can split $f$ into a set of regular paths from $s$ to $s'$ or from $s'$ to $s$, with additional regular cycles that do not use both an edge and its symmetric edge simultaneously.
        We consider the contribution of vertex potential $\phi$, small fragment terms, and large fragment terms to the total charge separately.
        \begin{itemize}
                \item For any path from $s$ to $s'$, vertex potential $\phi$ contributes $\phi(s)-\phi(s')=-2$ to the total charge, and this path contributes $1$ to the flow value. For any path from $s'$ to $s$, vertex potential $\phi$ contributes $2$ to the total charge, and this path contributes $-1$ to the flow value. Vertex potential $\phi$ contributes $0$ to the total charge of a cycle. Therefore, the total contribution of vertex potential $\phi$ to the total charge is $-2|f|$.
                \item For each small fragment $X_i$, the term $\mathds 1\left[e=e_i\lor e=e_i'\right]-\mathds 1\left[e\in \partial X_i\setminus \{e_i,e_i'\}\right]$ contributes at most $0$ to any regular path or cycle, as a regular path or cycle cannot use symmetric edges $e_i$ and $e_i'$ simultaneously, and if it uses one of them, it must use another edge in $\partial X_i\setminus\{e_i,e_i'\}$ to enter or exit $X_i$. 
                \item For a large fragment $X_i$, the only term is $\mathds 1\left[e=e_i\lor e=e_i'\right]$, which contributes at most $2$ to the whole flow. As a large fragment has size strictly larger than $\frac{2n}K$, there are fewer than $\frac K2$ such fragments, so the total contribution here is less than $2\cdot \frac K2=K$.
        \end{itemize}
        Therefore, the total charge of edges in the flow is strictly less than $-2|f|+K\le 0$ for flows of value at least $\frac K2$.
\end{proof}

Using this lemma, we can prove \Cref{lem:preserve-max-flow}.

\begin{proof}[Proof of \Cref{lem:preserve-max-flow}]
        Let $f$ be the current integral symmetric flow found by the algorithm, so that $f^*-f$ is supported by the current residual graph. In particular, all edges in the support of $f^*-f$ are active.

Our proof proceeds by defining a super-martingale and applying the optional stopping theorem. We first track the quantity
        \[Q=\bigg(\sum_{\text{symmetric pair } (e,e')\in \operatorname{supp}(f^*-f)} x_e\bigg)+\Phi+cn^2\log (2U)-N-2An\log (2U)\]
    throughout the algorithm, where $\Phi$ is the total energy, $N$ is the number of barrier toggle operations where a negative charge edge is sampled, $A$ is the number of augmentations already made, and $c$ is a large enough constant so that $N+2An\log (2U)\le cn^2\log (2U)$ always holds.

The quantity $Q$ defines a stochastic process that is frozen when the algorithm terminates, possibly with failure. It is always non-negative since $x_e\ge 1$, $\Phi\ge 0$, and $cn^2\log (2U)\ge N+2An\log (2U)$. Moreover, the initial value of $Q$ is at most $(c+2)n^2\log (2U)$.

We now show that $Q$ is a super-martingale, i.e.,  the expected value of $Q$ does not increase throughout the algorithm. We analyze the expected change of $Q$ in each step below.

    \begin{itemize}
        \item For augmentations, we set $x_e=1$ for augmented edges, and then replace them by their reverse edges in the residual graph.
        The first step does not increase the sum of $x_e$ for edge pairs $(e,e')$ in $\operatorname{supp}(f^*-f)$. The second step increases it by at most $n$ as the algorithm changes at most $n$ edge pairs, and each of them has $x_e=1$.
        This also increases $\Phi$ by at most $n\log (2U)$ by removing some edges and adding at most $n$ reverse edge pairs with $x_e=1$. Therefore, the expected increase from the first two terms is at most $2n\log (2U)$, which is canceled by the last term $-2An\log (2U)$ decreasing by $2n\log (2U)$.
        \item For barrier toggle operations, we consider the expected change of each term in $Q$.
        For the energy term $\Phi$, if the operation samples a positive charge edge, then $\Phi$ decreases by $\ln \frac 32$, and $N$ is unchanged. If it samples a negative charge edge $e$, then $\Phi$ increases by $\ln 2$, and $N$ increases by $1$. Therefore, $\Phi-N$ does not increase.
        Next, we consider the sum of $x_e$ for edges $e\in \operatorname{supp}(f^*-f)$. In barrier toggle operations, the probability of sampling an edge $e$ is proportional to $|d_e|w_e=\frac {|d_e|}{x_e}$, and the change to this variable $x_e$ is $\frac12 \operatorname{sgn}(d_e)x_e$. So the expected change to a variable $x_e$ is proportional to its charge $d_e$. Since the remaining flow value is at least $\frac K2$, \Cref{lem:regular-path-charge} gives that the total charge of edges in $\operatorname{supp}(f^*-f)$ is negative, so the expected change to $\sum_{\text{symmetric pair} (e,e')\in \operatorname{supp}(f^*-f)} x_e$ is negative. Therefore, the expected value of $Q$ does not increase during barrier toggle operations.
                \item For the clipping step, the only change to variables is that a variable $x_e<1$ gets clipped back to $1$. If this edge is not in $\operatorname{supp}(f^*-f)$, then only $\Phi$ decreases, and $Q$ does not increase. Otherwise, $\Phi$ decreases by $-\ln x_e$, and $\sum_{\text{symmetric pair} (e,e')\in \operatorname{supp}(f^*-f)} x_e$ increases by $1-x_e\le -\ln x_e$, so $Q$ does not increase. 
    \end{itemize}

Define a bound $B\le U$ to be chosen later. Consider stopping the super-martingale $Q$ at the first time some edge $e\in \operatorname{supp}(f^*-f)$ has $x_e>B$, or when the algorithm terminates, possibly with failure. By the optional stopping theorem, the expected value of $Q$ at this stopping point is at most the initial value, which is at most $(c+2)n^2\log (2U)$. In the event some edge $e\in \operatorname{supp}(f^*-f)$ has $x_e>B$, we have $Q\ge B$. Since $Q$ is nonnegative, the probability that this event occurs is at most $\frac{(c+2)n^2\log (2U)}{B}$ by Markov's inequality.

Consider another stochastic process
\[ R=\bigg(\sum_{\text{symmetric pair } (e,e') \text{ in } H} x_e\bigg)+(cn^2\log(2U)-M)(B+1)n^2 \]
where $M\le cn^2\log(2U)$ is the total number of barrier toggle operations, except that we freeze this process the moment any edge $e\in\operatorname{supp}(f^*-f)$ has $x_e>B$. We now show that $R$ is also a super-martingale.

\begin{itemize}
\item For augmentations, we remove edges with variable $x_e\ge1$ and replace them by their reverse edges with variable $1$, so $R$ cannot increase.

\item For each barrier toggle operation, let $Z=\sum |d_e|w_e$ denote the total sample weight. Since \Cref{lem:regular-path-charge} implies that the total charge of edges in $\operatorname{supp}(f^*-f)$ is negative, some edge $e\in\operatorname{supp}(f^*-f)$ has nonzero charge. Assuming that the stochastic process is not frozen, we have $x_e<B$, so $Z\ge w_{e^*}\ge\frac1B$.

The probability of sampling each edge $e$ is $\frac{|d_e|w_e}{Z}$, and the change to this variable $x_e$ is $\frac12 \operatorname{sgn}(d_e)x_e$ before clipping. Therefore, $\mathbb {E} [x_e]$ increases by $\frac{d_e}{2Z}$ for each edge $e$ before clipping. 
    Together with $Z\ge\frac1B$ and $d_e\le2$, we conclude that $\mathbb{E} [x_e]$ increases by at most $B$ in each barrier toggle operation before clipping, and possibly by another $1$ after clipping. Therefore, the expected increase of $\sum x_e$ is at most $(B+1)m$, which is canceled by the $(cn^2\log(2U)-M)(B+1)n^2$ term decreasing by $(B+1)m$.
\end{itemize}

Consider stopping the super-martingale $R$ at the first time some edge $e$ has $x_e>U$, or when the algorithm terminates, possibly with failure. By the optional stopping theorem, the expected value of $R$ at this stopping point is at most the initial value, which is at most\linebreak $(c+1)\log(2U)\,(B+1)n^4$. In the event some edge $e$ has $x_e>U$ \emph{and} no edge $e\in \operatorname{supp}(f^*-f)$ has $x_e>B$ at a previous point in time (which would freeze the process), we have $Q\ge U$. Since $Q$ is nonnegative, the probability that this event occurs is at most $\frac{(c+1)\log(2U)\,(B+1)n^4}{U}$ by Markov's inequality.

We have bounded the probability that any edge $e\in \operatorname{supp}(f^*-f)$ has $x_e>B$ (at any point in time) by $O\left(\frac{n^2\log (2U)}{B}\right)$, and the probability that some edge $e$ has $x_e>U$ \emph{and} no edge $e\in \operatorname{supp}(f^*-f)$ has $x_e>B$ at a previous point in time by $O\left(\frac{Bn^4\log (2U)}{U}\right)$. The probability that some edge $e$ has $x_e>U$ is at most the sum of these two probabilities. By setting $B=\frac{\sqrt U}{n}$, the total failure probability is at most $O\left(\frac{n^3\log (2U)}{\sqrt U}\right)$. 
\end{proof}

\subsection{Efficient non-zero charge edge sampling}\label{sec:sampling}

Now we consider the sampling process in the barrier toggle operation. We will show that this operation can be reduced to sampling edges from a cut $\partial S$ in an undirected graph, with updates to the cut $S$ and the graph, as stated below.

\begin{lemma}\label{lem:sampling-from-cut}
        There is an algorithm that maintains an undirected graph $G$ with polynomially bounded edge weights and a cut $S\subseteq V$, using $m^{1+o(1)}$ preprocessing time, and supports the following operations in $m^{o(1)}$ amortized time:
        \begin{itemize}
                \item Add or remove an edge in the graph.
                \item Add or remove a vertex from the cut $S$.
                \item Sample an edge from the cut $\partial S$ with probability proportional to its weight.
        \end{itemize}
\end{lemma}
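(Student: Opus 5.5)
We reduce sampling from a dynamic cut to a small number of vertex-sampling and edge-sampling problems. The tool that makes this possible is a dynamic expander decomposition with low-degree (sparse) out-edges, in the style of the dynamic expander hierarchies used in recent almost-linear-time flow and cut algorithms.

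\textbf{Step 1: bucketing by weight.} Since the edge weights are polynomially bounded, we first round each weight to a power of $2$. This splits the graph into $O(\log n)$ weight classes. We sample a class with probability proportional to its total cut weight, and then use rejection to correct for the rounding, which costs a constant factor in expected time. Within a class the graph is unweighted. The cut weight $w(\partial S)$ of each class is maintained exactly, since each vertex move changes it by the weighted degree of that vertex restricted to the class. We keep this quantity per vertex as a count of its neighbors on each side, computed lazily through the decomposition below.

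\textbf{Step 2: expander decomposition.} For each unweighted class, we maintain a fully dynamic $\phi$-expander decomposition with $\phi = 1/m^{o(1)}$ and $m^{o(1)}$ amortized update time. The boundary edges between clusters are then recursively decomposed, giving a hierarchy of depth $O(\log m/\log(1/\phi))$.

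For an expander cluster $C$, the edges of $\partial S$ inside $C$ number at least $\phi\cdot\min(\operatorname{vol}_C(S\cap C),\operatorname{vol}_C(C\setminus S))$. We call the smaller of these two sides the \emph{small side}. We sample a vertex of the small side proportional to its degree and then a uniformly random incident edge. This edge crosses the cut with probability at least about $\phi$, so rejection sampling gives an exact sample of the cut edges in $C$ after $O(1/\phi)=m^{o(1)}$ expected trials. Maintaining the small side with degree-weighted sampling trees per cluster costs $O(\log n)$ per vertex move. Moves that change which side is smaller are handled by keeping both sides in trees.

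The inter-cluster edges are handled recursively on the next level of the hierarchy. To select among the clusters and levels, each cluster also needs an estimate of its cut weight $|\partial S\cap E(C)|$. A constant-factor estimate is enough here, because the rejection step corrects the exact proportions as long as the estimates are upper bounds within factor $m^{o(1)}$. For the estimate, we use $\operatorname{vol}$ of the small side as the upper bound; it is within a $1/\phi$ factor of the truth. We accept a sampled cluster with probability equal to the true crossing fraction, which is realized implicitly by the rejection trial.

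\textbf{Step 3: edge insertions and deletions.} These are passed directly to the dynamic expander hierarchy. Any reclustering moves vertices between clusters, and the sampling trees are rebuilt only for the affected vertices. The amortized recourse is $m^{o(1)}$, which gives the claimed amortized bound and $m^{1+o(1)}$ preprocessing.

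\textbf{Main obstacle.} The hardest part is Step 2: combining rejection sampling across the levels of the hierarchy while keeping the sample exactly proportional to weight and the expected number of trials at $m^{o(1)}$. Errors compound multiplicatively over the depth, but the depth is $o(\log m)$-factor controlled: with $(1/\phi)^{\text{depth}}=m^{o(1)}$ for suitable $\phi$, the compounded loss stays within $m^{o(1)}$. Failing that, a fallback is to accept approximate (constant-factor) proportional sampling. This would still suffice for the toggle analysis, since the martingale arguments only need the expected change of $x_e$ to be proportional to $d_e$ up to a factor. However, the argument in the preceding section requires that factor to be the same for all edges, so exact sampling is preferred.
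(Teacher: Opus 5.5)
Your ingredients match the paper's: power-of-two weight classes, a recursive hierarchy of dynamic expander decompositions (GRST21 applied again to the inter-cluster edges), and, inside an expander, a degree-proportional vertex from the smaller side followed by a uniform incident edge, with rejection for non-crossing edges and for weight rounding. Two steps, as you wrote them, fail.

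\textbf{Step 1.} You cannot maintain the exact cut weight of each class in $m^{o(1)}$ time. Moving $v$ changes $|\partial S|$ by $\deg_{V\setminus S}(v)-\deg_S(v)$, not by the degree of $v$. Knowing this requires per-side neighbor counts, which must be updated at every neighbor whenever a vertex moves, i.e.\ $\Theta(\deg)$ work. The decomposition only supplies small-side volumes, which are upper bounds, not exact cut sizes.

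\textbf{Main obstacle.} Your resolution is false. With depth $\log m/\log(1/\phi)$ one has $(1/\phi)^{\text{depth}}=m$ for every $\phi$. So if the rejection loss really compounded level by level, the sampler would need polynomially many trials. The approximate fallback is also unavailable, as you yourself observe.

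The missing observation, which is how the paper argues, is that the hierarchy partitions the \emph{edge set}: every edge lies in exactly one expander $A$, at one level of one class $G_j$. So nothing has to be done per class or per level, and no class totals are needed.
\begin{enumerate}[label=(\alph*)]
\item Sample one expander from the flat collection over all classes and levels, with probability proportional to $2^j\min\{\operatorname{vol}_A(U\cap S),\operatorname{vol}_A(U\setminus S)\}$, where $U$ is the vertex set of $A$. These weights are maintainable, since a vertex lies in one cluster per level of each class.
\item Pick $v$ in the smaller side $B$ proportional to $\deg_A(v)$, then a uniform edge of $A$ at $v$.
\item Reject if the edge does not cross $S$; accept a crossing edge $e$ with probability $w_e/2^j$.
\end{enumerate}
A crossing edge of $A$ has exactly one endpoint in $B$. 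So it is proposed with probability $\frac{2^j\operatorname{vol}_A(B)}{W}\cdot\frac{1}{\operatorname{vol}_A(B)}=\frac{2^j}{W}$ and output with probability $\frac{w_e}{W}$, where $W$ is the total expander weight. This is exactly proportional to $w_e$.

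For the running time: each $A$ is a $(\phi/s)$-expander (with slack $s$) and $2^j<2w_e$, so $W\le\frac{2s}{\phi}\,w(\partial S)$. Hence a trial succeeds with probability at least $\frac{\phi}{2s}=m^{-o(1)}$. That is a single factor of $s/\phi$ overall, not one per level.
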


We first prove this lemma using expander techniques. Suppose that the graph is a $\phi$-expander, i.e., for any cut $S$, $w(\partial S)\ge \phi\min\{\operatorname{vol}(S), \operatorname{vol}(V\setminus S)\}$, where $\operatorname{vol}(S)=\sum_{v\in S} \operatorname{deg}(v)$ is the total degree of the vertices in $S$. In this simple case, we can sample an edge from the cut $\partial S$ as follows: First sample a vertex $v$ from $S$ ($V\setminus S$ if $\operatorname{vol}(S)\ge \operatorname{vol}(V\setminus S)$) with probability proportional to its degree $\operatorname{deg}(v)$, and then sample an edge adjacent to $v$; if this edge is not in the cut $\partial S$, we reject it and repeat the process. Each edge in the cut $\partial S$ is sampled with probability proportional to its weight, and the expected number of repeats is at most $\frac 1{\phi}$ because of the $\phi$-expansion property. We extend this idea to general graphs by maintaining a sequence of dynamic expander decompositions, similar to the one used for directed graph sparsification in \cite{Li26}, which fully decomposes the edges of the graph into expanders. We start by stating the dynamic expander decomposition on unweighted graphs that we will use in our algorithm.

\begin{definition}[$\phi$-expander]\label{def:expander}
    An undirected graph is a $\phi$-expander if for any cut $S\subseteq V$, $w(\partial S)\ge \phi\min\{\operatorname{vol}(S), \operatorname{vol}(V\setminus S)\}$,
\end{definition}

\begin{definition}[$\phi$-expander decomposition with slack $s$]\label{def:expander-decomposition}
    A $\phi$-expander decomposition of an unweighted graph $G$ with slack $s$ is a partition $\mathcal P=\{V_1,\ldots,V_k\}$ of the vertex set $V$ such that the induced subgraph $G[V_i]$ on each cluster $V_i$ is a $(\phi/s)$-expander, and the total number of edges between different clusters, denoted as $\partial\mathcal P$, is at most $\phi m$.
\end{definition}

\begin{theorem}[Theorem 7.2 of \cite{GRST21}]\label{thm:dynamic-expander-decomposition}
    Let $0<\phi<1$ be a parameter. There is a randomized algorithm that, given an unweighted graph $G$ undergoing edge insertions and deletions, maintains a $\phi$-expander decomposition $\mathcal P$ with slack $s=2^{O(\log^{1/2} n)}$, with the explicit set of edges $\partial\mathcal P$. The algorithm has update time $2^{O(\log^{1/2} n)}/\phi^2$ and amortized recourse $2^{O(\log^{1/2} n)}$ with respect to $\partial \mathcal P$, which means that the amortized changes to the set $\partial \mathcal P$ is at most $2^{O(\log^{1/2} n)}$ per update in $G$.
\end{theorem}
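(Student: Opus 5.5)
The statement is Theorem 7.2 of \cite{GRST21}, and we use it as a black box. If I had to reprove it, I would follow the standard template: a static decomposition, expander pruning to absorb deletions, and a hierarchy of periodic rebuilds that amortizes the cost.

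\textbf{Static base.} First I would fix a static almost-linear-time $\phi$-expander decomposition. A cut-matching-game based routine computes, in $m^{1+o(1)}/\phi$ time, a partition in which each $G[V_i]$ is a $(\phi/s_0)$-expander and $|\partial\mathcal P|\le \phi m/2$. Here the slack is $s_0=2^{O(\log^{1/2} n)}$, which avoids the polylogarithmic losses of exact certification. This produces the initial $\mathcal P$ and the explicit edge set $\partial\mathcal P$.

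\textbf{Handling updates.} An inserted edge whose endpoints lie in different clusters is simply added to $\partial\mathcal P$, at recourse $1$. An edge inserted inside a cluster can only increase that cluster's expansion, so it needs nothing.

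Deletions are handled with dynamic expander pruning in the style of Saranurak--Wang. After $k$ deletions inside a $\psi$-expander cluster $V_i$, pruning maintains a set $P\subseteq V_i$ with $\operatorname{vol}(P)=O(k/\psi)$ and $|E(P,V_i\setminus P)|=O(k)$, such that $G[V_i\setminus P]$ remains a $\psi/6$-expander. The pruned vertices are split off, for instance as singletons to be reprocessed later. The new inter-cluster edges they create number at most $O(\operatorname{vol}(P))$, and a careful accounting charges the cut edges to deletions, giving $O(k)$ amortized recourse on $\partial\mathcal P$. Pruning costs $\tilde O(k/\psi^2)$, which matches the $1/\phi^2$ update time.

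\textbf{Keeping the guarantees over time.} Repeated pruning degrades expansion by a constant factor each time, and pruned pieces and inserted cut edges accumulate. To control this I would use a hierarchy of $L=O(\log^{1/2} n)$ levels, with level $i$ responsible for batches of roughly $m/n^{i/L}$ updates. When a level's budget is exhausted, we statically recompute the decomposition on the subgraph it owns, namely the pruned parts and the inserted edges, and push the result down. Each level loses a $2^{O(\log^{1/2}n)}$ factor in either slack or recourse, so across $L$ levels the total loss is $2^{O(\log^{1/2} n)}$. Rebuild costs amortize to $n^{1/L}\cdot m^{o(1)}/\phi^2=2^{O(\log^{1/2}n)}/\phi^2$ per update. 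Choosing the rebuild frequency so that accumulated cut edges never exceed $\phi m/2$ beyond the static bound keeps $|\partial\mathcal P|\le\phi m$.

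\textbf{Main obstacle.} The hard step is the recourse bound. A rebuild can rewrite $\partial\mathcal P$ wholesale, so it must be charged against the many updates that triggered it. One also has to make sure that cascading pruning across levels does not multiply recourse by more than $2^{O(\log^{1/2}n)}$. I would handle this by rebuilding only the portion of the graph touched since the last rebuild at that level, and by bounding that portion's size by the level's batch size. This is exactly where the subpolynomial balancing of $L$ against $n^{1/L}$ enters.
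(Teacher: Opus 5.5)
Your treatment matches the paper, which also imports this result from \cite{GRST21} without proof and uses it only as a black box, so the citation is all that is needed. Your reconstruction sketch carries no weight in the argument and would not go through as written. For instance, splitting pruned vertices off as singletons puts up to $\operatorname{vol}(P)/2=O(k/\psi)$ edges into $\partial\mathcal P$, giving recourse that depends on $1/\phi$ rather than the claimed $2^{O(\log^{1/2} n)}$. Likewise, an edge inserted inside a cluster can lower volume-based expansion, since it raises $\operatorname{vol}(S)$ without raising $w(\partial S)$. So either label the sketch as heuristic or drop it.
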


Using these results, we can prove \Cref{lem:sampling-from-cut}.

\begin{proof}[Proof of \Cref{lem:sampling-from-cut}]
    For every possible $i$, let $G_i$ be the subgraph of $G$ that contains all edges with weight in $(2^{i-1}, 2^i]$. For each $G_i$, similar to Theorem 10 of \cite{Li26}, we maintain a sequence of dynamic expander decompositions as follows: We first apply \Cref{thm:dynamic-expander-decomposition} and maintain a expander decomposition $\mathcal P_i^{(1)}$ of $G_i$ with $\phi=2^{-\log^{3/4} n}$, then recursively perform this operation on the remaining edges $\partial \mathcal P_i^{(1)}$ between different expanders to get $\mathcal P_i^{(2)}$, and so on. As the number of edges is reduced by a factor of $2^{\log^{3/4} n}$ for each level of recursion, the total number of levels is $O(\log^{1/4} n)$. As the amortized recourse of each level is $2^{O(\log^{1/2} n)}$, the overall recourse is $2^{O(\log^{3/4} n)}$, and the total update time is also $2^{O(\log^{3/4} n)}$ per update. 

    This process splits the edges of $G_i$ into many $2^{-O(\log^{3/4} n)}$-expanders, where each edge is in exactly one expander, and each vertex is in at most $O(\log^{1/4} n)$ expanders. For each vertex $v$ in each expander $A$, we maintain its degree $\operatorname{deg}_A(v)$ in $A$, and the set of its adjacent edges in $A$. For each expander $A$ with vertex support $U\subseteq V$, we maintain its intersection $U\cap S$ with the cut $S$, along with each vertex's degree $\operatorname{deg}_A(v)$ for $v\in U\cap S$, and we also maintain the same information for the complement $U\setminus S$. Finally, for each expander $A$, supposing that it comes from graph $G_j$, we set its weight to be $2^j\min\{\operatorname{vol}_A(U\cap S), \operatorname{vol}_A(U\setminus S)\}$. All of this information can be maintained for cut updates, as each vertex is in at most $O(\log^{1/4} n)$ expanders, and each edge is in exactly one expander.

        For sampling, we first sample an expander $A$ (of vertex support $U$) with probability proportional to its weight $2^j\min\{\operatorname{vol}_A(U\cap S), \operatorname{vol}_A(U\setminus S)\}$, where $j$ is the index of the graph $G_j$ that contains $A$. Next, supposing that $B$ is the smaller side of the cut $U\cap S$ and $U\setminus S$, we sample a vertex $v$ from $B$ with probability proportional to its degree $\operatorname{deg}_A(v)$ in $A$, and then sample an edge adjacent to $v$ in $A$. If this edge is not in the cut $\partial S$, we reject it and repeat; If this edge is in $\partial S$, we also reject and repeat with probability $1-\frac{w_e}{2^j}\le\frac12$. The expanding property of $A$ guarantees that we choose an edge in $\partial S$ with probability at least $2^{-O(\log^{3/4} n)}$. The probability that an edge $e$ in the cut $\partial S$ is sampled is proportional to $2^j$, where $j$ is the index of the graph $G_j$ that contains $e$, and thus the final step guarantees that the probability of sampling $e$ is proportional to its weight $w_e$. The expected number of repeats is at most $\frac 2{2^{-O(\log^{3/4} n)}}=n^{o(1)}$.
\end{proof}

We now finish the task of sampling an edge $e$ with non-zero charge $d_e$ using this lemma by reducing our task to sampling edges from a cut $\partial S$ in an undirected graph.
Recall that for a barrier $(A, A', X_1, \ldots, X_k, M)$, we define the edge charge $d_e$ for every edge as
\[d_e=\phi(u)-\phi(v)+\sum_{i}\mathds 1\left[e=e_i\lor e=e_i'\right]-\sum_{X_i: |X_i|\le \frac {2n}K}\mathds 1\left[e\in \partial X_i\setminus \{e_i,e_i'\}\right]\]
where $\phi(u)=\mathds{1}[u\in A']-\mathds{1}[u\in A]$. Our goal is to sample an edge $e$ with probability proportional to $|d_e|w_e$. 

Our first step is to write the charge $d_e$ for edge $e=(u,v)$ as the difference of vertex heights $h(u)-h(v)$ for some function $h$ on vertices, except for barrier edges and edges in a small fragment $X_i$, as follows.

\begin{lemma}\label{lem:edge-charge-to-vertex-height}
        Let $G$ be the current residual graph, and let $(A, A', X_1, \ldots, X_k, M)$ be the barrier found by the algorithm that is used to define the operation vector $\overrightarrow{d}$. Consider an auxiliary bipartite graph $G_1$ as follows: For each vertex $u$ in $G$, add two copies $u^L$ and $u^R$ into $G_1$; for each edge $e=(u,v)$ in $G$, add an edge $(u^L,v^R)$ into $G_1$. Define height $h$ on vertices in $G_1$ as follows:
        \begin{align*}
                h(u^L)&=\phi(u)-\mathds 1\left[u\in X_i,|X_i|\le \frac{2n}K\right]\\
                h(u^R)&=\phi(u)+\mathds 1\left[u\in X_i,|X_i|\le \frac{2n}K\right]
        \end{align*}

        For any edge $e=(u,v)$ in $G$ that is not any of the barrier edge $e_i$ from $A$ to each $X_i$ and $e_i'$ from each $X_i$ to $A'$, and not in a small fragment $X_i$, the charge $d_e$ can be written as the difference of vertex heights in $G_1$, i.e. $d_e=h(u^L)-h(v^R)$.
\end{lemma}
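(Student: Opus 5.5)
The plan is a direct comparison of the two expressions term by term. Fix an edge $e=(u,v)$ satisfying the hypotheses: it is none of the barrier edges $e_i,e_i'$, and its two endpoints do not lie in a common small fragment. Because $e\notin\{e_i,e_i'\}$ for every $i$, the indicator sum $\sum_i \mathds 1[e=e_i\lor e=e_i']$ in \Cref{def:operation-vector} vanishes. For the same reason $e\in\partial X_i\setminus\{e_i,e_i'\}$ is equivalent to $e\in\partial X_i$. So
\[d_e=\phi(u)-\phi(v)-\#\Bigl\{i: |X_i|\le \tfrac{2n}K,\ e\in\partial X_i\Bigr\}.\]
On the other side, by definition of $h$,
\[h(u^L)-h(v^R)=\phi(u)-\phi(v)-\mathds 1\Bigl[u\in X_i,|X_i|\le \tfrac{2n}K\Bigr]-\mathds 1\Bigl[v\in X_j,|X_j|\le \tfrac{2n}K\Bigr].\]
Here the fragments are disjoint, so each indicator refers to at most one fragment.

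It therefore suffices to show that the number of small fragments whose boundary contains $e$ equals $\mathds 1[u\text{ in a small fragment}]+\mathds 1[v\text{ in a small fragment}]$. An edge lies in $\partial X_i$ exactly when exactly one of its endpoints is in $X_i$. I will verify the identity by cases on where the endpoints lie.

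\begin{enumerate}
\item Neither endpoint is in a small fragment. No small fragment has $e$ on its boundary, and both sides are $0$. This includes edges inside a large fragment or touching one; large fragments contribute nothing to either expression.
\item Exactly one endpoint, say $u\in X_i$ small, is in a small fragment. Then $v\notin X_i$, so $e\in\partial X_i$, and $e$ lies on no other small fragment's boundary. Both sides are $1$. The case of $v$ is symmetric.
\item The endpoints lie in two small fragments $X_i\ne X_j$. Then $e\in\partial X_i$ and $e\in\partial X_j$, and both sides are $2$.
\item Both endpoints lie in the same small fragment. This case is excluded by hypothesis, and it is exactly where the identity would fail: the left side gives $0$ but the right side gives $2$.
\end{enumerate}

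I expect no real obstacle. The only care needed is to record that the barrier-edge term and the exclusion $\setminus\{e_i,e_i'\}$ both disappear under the hypothesis, and that the large-fragment terms are absent from both expressions.
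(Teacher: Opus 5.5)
Your proposal is correct and follows the paper's own argument: drop the barrier-edge terms, then use that for an edge not inside any small fragment, $\mathds 1[e\in\partial X_i]=\mathds 1[u\in X_i]+\mathds 1[v\in X_i]$, and sum over the disjoint small fragments. Your case analysis, including the observation that an edge inside a small fragment is exactly where the identity breaks ($0$ versus $2$), spells out the one-line equivalence the paper asserts.
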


\begin{proof}
Since the edge $e=(u,v)$ is not any of the barrier edges, its charge simplifies to
\[ d_e=\phi(u)-\phi(v)-\sum_{X_i: |X_i|\le \frac {2n}K}\mathds 1\left[e\in \partial X_i\right] .\]
In addition, since the edge is not in a small fragment $X_i$, the expression $\mathds 1\left[e\in \partial X_i\right]$ is equivalent to $\mathds 1\left[u\in X_i\right]+\mathds 1\left[v\in X_i\right]$. We conclude that $d_e=h(u^L)-h(v^R)$.
\end{proof}

Since both barrier edges and edges in a small fragment $X_i$ have charge $0$, we can ignore them in the sampling process. Therefore, if we remove these edges from the graph $G_1$, the problem reduces to sampling edges $(u^L,v^R)$ with probability proportional to their weight times the absolute difference of vertex heights $|h(u^L)-h(v^R)|$. This can be reduced to the problem of sampling edges from a cut $\partial S$ in an undirected graph as follows. 

\begin{lemma}\label{lem:vertex-height-to-cut-sampling}
        Let $G_1$ be any directed graph with vertex heights $h\in\{-1,0,1\}$. Define an undirected graph $G_2$ as follows: For each vertex $u$ in $G_1$, create two vertices $u^0$ and $u^1$ in $G_2$. For each edge $(u,v)$ in $G_1$, add two copies $(u^0,v^0)$ and $(u^1,v^1)$ of this edge in $G_2$. 
        Let $S$ be the following set defined by vertex heights $h$:
        \[S=\{u^0: h(u)\ge 0\}\cup \{u^1: h(u)\ge 1\}.\]
        Then for any edge $(u,v)$ in $G_1$, the number of copies of this edge in the cut $\partial S$ is exactly $|h(u)-h(v)|$.
\end{lemma}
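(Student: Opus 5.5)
The approach is to encode each height $h\in\{-1,0,1\}$ by two threshold indicators and count the copies of an edge one threshold at a time. For a vertex $u$ of $G_1$ and $j\in\{0,1\}$, define $a_j(u)=\mathds 1[u^j\in S]$. By the definition of $S$,
\[a_0(u)=\mathds 1[h(u)\ge 0],\qquad a_1(u)=\mathds 1[h(u)\ge 1].\]
The key identity is that, for $h(u)\in\{-1,0,1\}$,
\[a_0(u)+a_1(u)=h(u)+1.\]
To see this, check the three values: $h(u)=-1$ gives $0+0$, $h(u)=0$ gives $1+0$, and $h(u)=1$ gives $1+1$. So the two indicators form a unary encoding of $h(u)+1$. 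Because the encoding is monotone, $a_0(u)\ge a_1(u)$ always.

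Next, count the copies of a fixed edge $(u,v)$ of $G_1$ that lie in the cut. The copy $(u^j,v^j)$ lies in $\partial S$ exactly when exactly one endpoint is in $S$, that is, when $a_j(u)\ne a_j(v)$. The number of such copies is therefore
\[|a_0(u)-a_0(v)|+|a_1(u)-a_1(v)|.\]

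The only step needing an argument is turning this sum into $|h(u)-h(v)|$, and monotonicity handles it. Assume without loss of generality that $h(u)\ge h(v)$; the cut is undirected, so the order of the endpoints does not matter. Each $a_j$ is a nondecreasing function of $h$, so $a_j(u)\ge a_j(v)$ for both $j$. Both differences are then nonnegative, and the absolute values can be dropped:
\[\sum_{j}\bigl(a_j(u)-a_j(v)\bigr)=(h(u)+1)-(h(v)+1)=h(u)-h(v)=|h(u)-h(v)|.\]
This is the claimed count.

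Finally, the direction of edges in $G_1$ plays no role in the argument, since membership in the cut ignores orientation. Edges of $G_1$ with $u=v$ do not occur because $G_1$ is bipartite; even if they did, both sides of the identity would equal $0$.
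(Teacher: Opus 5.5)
Your proof is correct and follows the same approach as the paper, which also counts, for each threshold $i\in\{0,1\}$, whether exactly one of $h(u)\ge i$ and $h(v)\ge i$ holds. The only difference is that you justify the final count explicitly, via the unary encoding $a_0(u)+a_1(u)=h(u)+1$ and monotonicity, whereas the paper simply asserts that the count equals $|h(u)-h(v)|$ because heights lie in $\{-1,0,1\}$.
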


\begin{proof}
For each $i\in\{0,1\}$, the edge $(u^i,v^i)$ is in $\partial S$ if and only if exactly one of $h(u)\ge i$ and $h(v)\ge i$ is true. Since $h(u)$ and $h(v)$ are restricted to $\{-1,0,1\}$, the number of such $i\in\{0,1\}$ satisfying exactly one of $h(u)\ge i$ and $h(v)\ge i$ is exactly $|h(u)-h(v)|$.
\end{proof}

Therefore, we can apply \Cref{lem:sampling-from-cut} to the auxiliary graph $G_2$ to support the sampling process we need. It remains to maintain these auxiliary graphs $G_1$ and $G_2$ with barrier edges and edges in small fragments removed as the regular reachability algorithm runs. We use the following property of the regular reachability algorithm to maintain these graphs efficiently. The detailed analysis of the regular reachability algorithm is deferred to \Cref{app:resumable-rr}.

\begin{observation}\label{obs:regular-reachability-algorithm}
        The regular reachability algorithm declares barriers as follows: It maintains a vertex set $A$ and a set of fragments $X_1,\ldots,X_k$. During the algorithm, it performs the following operations on these sets:

        \begin{itemize}
                \item Add a new vertex to $A$.
                \item Merge some existing fragments and vertices in $A$ into a new fragment, with a barrier edge connecting $A$ and the new fragment. These vertices are removed from $A$.
                \item Declare that the current set $A$ and fragments $X_1,\ldots,X_k$ form a barrier.
        \end{itemize}
\end{observation}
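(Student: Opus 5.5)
This observation is about the structure of the Gabow--Kolmogorov regular reachability search of \cite{GK96}, run on $H$. The plan is to unroll that search and check that every state change falls into one of the three listed operations.

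\textbf{Invariant.} I would fix the search state as follows. The algorithm keeps a set of vertices reached from $s$ by regular paths, which are not yet absorbed into any blossom-like structure; this is $A$. It also keeps a collection of self-symmetric ``fragments'', which are the analogues of contracted blossoms. Each fragment is attached to $A$ by a unique entry edge $e_i$ (and symmetrically $e_i'$ to $A'$). The invariant to maintain is:
\begin{itemize}
  \item $A$, $A'$, the fragments, and $M=V\setminus(A\cup A'\cup\bigcup X_i)$ partition $V$;
  \item every edge of $H$ that has been scanned from $A$ or from a fragment lands in $A$, in a fragment via its barrier edge, or is internal to a fragment.
\end{itemize}

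\textbf{Scanning step.} Scanning an edge $(u,v)$ with $u\in A$ has three outcomes.
\begin{enumerate}
  \item If $v\in M$, the search adds $v$ to $A$. This is the first operation.
  \item If $v\in A'$, or $v$ lies in a fragment but the edge is not its barrier edge, then \cite{GK96} either finds a regular path or finds a ``bud''. A bud is a closed alternating structure in which the last common ancestor $w\in A$ of $u$ and $\sigma(v)$ in the search tree, together with everything on the two tree paths, becomes a new self-symmetric set. This set contains some vertices of $A$ (and their mirrors in $A'$) and possibly several old fragments. It is entered from $A$ by exactly one edge, namely the tree edge into the top of the bud.
  \item In the bud case, the old fragments are merged with these vertices into one new fragment. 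The merged vertices leave $A$, and the tree edge becomes the new barrier edge. This is the second operation. I would then check that the invariant still holds.
\end{enumerate}

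\textbf{Termination.} When no unscanned edge leaves $A\cup\bigcup X_i$ in $H$, the invariant together with the definition of a barrier (\Cref{def:barrier}) shows that the current sets form a barrier. Declaring it is the third operation. Resuming after edges are added to $H$ just continues scanning from the same state; this is the content of \Cref{app:resumable-rr}.

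\textbf{Main obstacle.} The hard part is verifying, against the actual description in \cite{GK96}, that bud contraction never splits a fragment or moves vertices out of a fragment. In other words, fragments must only ever merge. Equally, the new fragment must have exactly one edge from $A$; this follows from the search-tree structure, because a bud's base has a unique parent edge. I would handle this by identifying fragments with the maximal contracted buds of \cite{GK96} and using the fact that contractions are nested, i.e.\ laminar.
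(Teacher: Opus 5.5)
Your overall route is the paper's (\Cref{app:resumable-rr}): run the search of \cite{GK96}, take $A$ to be the reached vertices not yet absorbed into a contracted bud, and read off Grow, Merge and Declare. The step that fails is ``I would then check that the invariant still holds'' after a Merge. So does your claim that the new fragment has exactly one edge from $A$ because the bud's base has a unique parent edge. A Merge can invalidate edges that were already scanned, in two ways.

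(a) An edge $(a,x)$ with $a,x\in A$ may have been scanned earlier as a harmless $A$-to-$A$ edge. If the bud puts $x$ but not $a$ into $S$, then $(a,x)$ becomes a second edge from $A$ into the new fragment. In matching terms, this is an outer vertex adjacent to an inner vertex of the new blossom. The parent edge of the base says nothing about such non-tree edges.

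(b) A fragment whose barrier edge leaves some $y\in S$, but which is not on the two tree paths, is not absorbed by your rule. Afterwards it has no edge from $A$ and is joined by an edge to the new fragment. For example, put the pairs $s\to w$, $w\to y$, $y\to b$, $b\to c$, $c\to b'$ in $H$. The search pauses with $A=\{s,w,y\}$ and fragment $\{b,c,b',c'\}$. Inserting the pairs $y\to p$ and then $p\to w'$ creates the bud $w\to y\to p\to w'$, which does not contain that fragment. So fragments are not just the maximal contracted buds.

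Your invariant also would not give a barrier at termination even if it held. It permits an edge from one fragment into another through the latter's barrier edge. It also excludes the mirror barrier edges $e_i'$ from $X_i$ into $A'$, which always exist.

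To repair this, impose the barrier conditions only at pauses, with the invariant that every fragment's barrier edge has its tail in $A$.
\begin{itemize}
\item For (b), let a Merge also absorb every fragment whose barrier edge leaves $S$. This is why the Merge in \Cref{lem:resumable-rr} allows arbitrary fragments $X_{i_1},\ldots,X_{i_r}$.
\item For (a), show that these edges are re-examined before the next pause. The mirror $(x',a')$ leaves the newly reachable set $S'$, whose out-edges have never been scanned. It is a non-barrier edge from a fragment into $A'$, so scanning it triggers another Merge or yields a regular path. In the contracted-graph view of the paper's sketch, redirecting edges into the bud to $v'$ does this automatically.
\end{itemize}
For the same reason, your case analysis must cover scans whose tail lies in a fragment, not only in $A$. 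An edge from a fragment into $M$ is a Grow step. An edge into $A'$ other than $e_i'$, or into another fragment, leads to a Merge. Since every repair is again a Grow or a Merge, the statement of the observation is unaffected, but your proof needs these changes to go through.
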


By this observation, the vertex height $h$ of any copy of a vertex $u$ in $G_1$ is changed at most three times during the regular reachability algorithm: (i) when it is added to $A$, (ii) when it is merged into a small fragment, and (iii) when it is merged into a large fragment.
We delete an edge from $G_1$ if it becomes a barrier edge or an edge in a small fragment, and we add it back if it is no longer such an edge. Barrier edges only lead to $O(n)$ edge deletions and additions. 
To delete edges in small fragments, when we merge some fragments or vertices into a small fragment, we delete all edges between them; and when we merge them into a large fragment, we add all edges within each small fragment back. For a small fragment of size $s$ before merging into large fragments, we make $O(s^2)$ edge deletions and additions. Therefore, the total number of edge deletions and additions is at most $O(ns)=O(\frac{n^2}K)$, as each small fragment has size at most $\frac{2n}K$.

Summing these together, we get a total of $O(\frac{n^2}K)$ edge modifications. This results in $n^{2+o(1)}/K$ time for maintaining the auxiliary graphs when finding one regular path. Since we want to augment the flow by $K/2$ in \Cref{lem:finding-regular-path}, this takes $n^{2+o(1)}$ time in total. Barrier toggle operations take $O(n^2\log (2U))\cdot n^{o(1)}=n^{2+o(1)}$ time in total. This completes the proof of \Cref{lem:finding-regular-path}, and thus the proof of \Cref{thm:main}.

\bibliography{refs}

@article{GK04,
  author  = {Goldberg, Andrew V. and Karzanov, Alexander V.},
  title   = {Maximum Skew-Symmetric Flows and Matchings},
  journal = {Mathematical Programming},
  volume  = {100},
  number  = {3},
  pages   = {537--568},
  year    = {2004},
  doi     = {10.1007/s10107-004-0505-z},
  url     = {https://doi.org/10.1007/s10107-004-0505-z}
}

@article{GK96,
  author  = {Goldberg, Andrew V. and Karzanov, Alexander V.},
  title   = {Path Problems in Skew-Symmetric Graphs},
  journal = {Combinatorica},
  volume  = {16},
  number  = {3},
  pages   = {353--382},
  year    = {1996},
  doi     = {10.1007/BF01261321},
  url     = {https://doi.org/10.1007/BF01261321}
}

@inproceedings{GRST21,
  author    = {Goranci, Gramoz and R{\"a}cke, Harald and Saranurak, Thatchaphol and Tan, Zihan},
  title     = {The Expander Hierarchy and its Applications to Dynamic Graph Algorithms},
  booktitle = {Proceedings of the 2021 {ACM-SIAM} Symposium on Discrete Algorithms ({SODA})},
  pages     = {2212--2228},
  publisher = {Society for Industrial and Applied Mathematics},
  year      = {2021},
  doi       = {10.1137/1.9781611976465.132},
  url       = {https://doi.org/10.1137/1.9781611976465.132}
}

@article{Li26,
  title={Balancing Weights, Directed Sparsification, and Augmenting Paths},
  author={Li, Jason},
  journal={arXiv preprint arXiv:2604.14633},
  year={2026}
}

@misc{HLLZ26,
  author        = {Hair, Isaac M. and Li, George Z. and Li, Jason and Zhang, Junkai},
  title         = {{Bellman--Ford} in Almost-Linear Time},
  year          = {2026},
  eprint        = {2607.19346},
  archivePrefix = {arXiv},
  primaryClass  = {cs.DS},
  doi           = {10.48550/arXiv.2607.19346},
  url           = {https://arxiv.org/abs/2607.19346},
  note          = {Preprint}
}

@inproceedings{ALPS23,
  author    = {Abboud, Amir and Li, Jason and Panigrahi, Debmalya and Saranurak, Thatchaphol},
  title     = {All-Pairs Max-Flow is no Harder than Single-Pair Max-Flow: {Gomory--Hu} Trees in Almost-Linear Time},
  booktitle = {2023 {IEEE} 64th Annual Symposium on Foundations of Computer Science ({FOCS})},
  pages     = {2204--2212},
  publisher = {IEEE},
  year      = {2023},
  doi       = {10.1109/FOCS57990.2023.00137},
  url       = {https://doi.org/10.1109/FOCS57990.2023.00137}
}

@inproceedings{HLRW24,
  author    = {Henzinger, Monika and Li, Jason and Rao, Satish and Wang, Di},
  title     = {Deterministic Near-Linear Time Minimum Cut in Weighted Graphs},
  booktitle = {Proceedings of the 2024 Annual {ACM-SIAM} Symposium on Discrete Algorithms ({SODA})},
  pages     = {3089--3139},
  publisher = {Society for Industrial and Applied Mathematics},
  year      = {2024},
  doi       = {10.1137/1.9781611977912.111},
  url       = {https://doi.org/10.1137/1.9781611977912.111}
}

@inproceedings{CKLPGS22,
  author    = {Chen, Li and Kyng, Rasmus and Liu, Yang P. and Peng, Richard and Probst Gutenberg, Maximilian and Sachdeva, Sushant},
  title     = {Maximum Flow and Minimum-Cost Flow in Almost-Linear Time},
  booktitle = {2022 {IEEE} 63rd Annual Symposium on Foundations of Computer Science ({FOCS})},
  pages     = {612--623},
  publisher = {IEEE},
  year      = {2022},
  doi       = {10.1109/FOCS54457.2022.00064},
  url       = {https://doi.org/10.1109/FOCS54457.2022.00064}
}

@inproceedings{BBLST25,
  author    = {Bernstein, Aaron and Blikstad, Joakim and Li, Jason and Saranurak, Thatchaphol and Tu, Ta-Wei},
  title     = {Combinatorial Maximum Flow via Weighted Push-Relabel on Shortcut Graphs},
  booktitle = {2025 {IEEE} 66th Annual Symposium on Foundations of Computer Science ({FOCS})},
  pages     = {464--485},
  publisher = {IEEE},
  year      = {2025},
  doi       = {10.1109/FOCS63196.2025.00026},
  url       = {https://doi.org/10.1109/FOCS63196.2025.00026}
}

@inproceedings{BNW22,
  author    = {Bernstein, Aaron and Nanongkai, Danupon and Wulff-Nilsen, Christian},
  title     = {Negative-Weight Single-Source Shortest Paths in Near-linear Time},
  booktitle = {2022 {IEEE} 63rd Annual Symposium on Foundations of Computer Science ({FOCS})},
  pages     = {600--611},
  publisher = {IEEE},
  year      = {2022},
  doi       = {10.1109/FOCS54457.2022.00063},
  url       = {https://doi.org/10.1109/FOCS54457.2022.00063}
}

@misc{HJS26b,
  author        = {Haeupler, Bernhard and Jiang, Yonggang and Saranurak, Thatchaphol},
  title         = {Combinatorial Minimum Cost Flow in Almost-Linear Time on Dense Graphs},
  year          = {2026},
  eprint        = {2610.32556},
  archivePrefix = {arXiv},
  primaryClass  = {cs.DS},
  doi           = {10.48550/arXiv.2610.32556},
  url           = {https://arxiv.org/abs/2610.32556},
  note          = {To appear in FOCS 2026}
}

@misc{FLSW26,
  author        = {Fleischmann, Henry and Li, Jason and Saranurak, Thatchaphol and Wang, Benyu},
  title         = {Directed Global Minimum Cut in Almost-Linear Time},
  year          = {2026},
  eprint        = {2610.10898},
  archivePrefix = {arXiv},
  primaryClass  = {cs.DS},
  doi           = {10.48550/arXiv.2610.10898},
  url           = {https://arxiv.org/abs/2610.10898},
  note          = {Preprint}
}

@inproceedings{DMMSY25,
  author    = {Duan, Ran and Mao, Jiayi and Mao, Xiao and Shu, Xinkai and Yin, Longhui},
  title     = {Breaking the Sorting Barrier for Directed Single-Source Shortest Paths},
  booktitle = {Proceedings of the 57th Annual {ACM} Symposium on Theory of Computing ({STOC})},
  pages     = {36--44},
  publisher = {Association for Computing Machinery},
  year      = {2025},
  doi       = {10.1145/3717823.3718179},
  url       = {https://doi.org/10.1145/3717823.3718179}
}

@inproceedings{MV80,
  author    = {Micali, Silvio and Vazirani, Vijay V.},
  title     = {An {$O(\sqrt{|V|}\,|E|)$} Algorithm for Finding Maximum Matching in General Graphs},
  booktitle = {21st Annual Symposium on Foundations of Computer Science ({FOCS})},
  pages     = {17--27},
  publisher = {IEEE Computer Society},
  year      = {1980},
  doi       = {10.1109/SFCS.1980.12},
  url       = {https://doi.org/10.1109/SFCS.1980.12}
}

@inproceedings{MS04,
  author    = {Mucha, Marcin and Sankowski, Piotr},
  title     = {Maximum Matchings via {Gaussian} Elimination},
  booktitle = {45th Annual {IEEE} Symposium on Foundations of Computer Science ({FOCS})},
  pages     = {248--255},
  publisher = {IEEE Computer Society},
  year      = {2004},
  doi       = {10.1109/FOCS.2004.40},
  url       = {https://doi.org/10.1109/FOCS.2004.40}
}
\bibliographystyle{alpha}

\appendix

\section{Resumable Regular Reachability}\label{app:resumable-rr}

For a barrier $(A,A',X_1,\ldots,X_k,M)$ as in \Cref{def:barrier}, the unique edge $e_i$ from $A$ to $X_i$ and its symmetric edge $e_i'$ (the unique edge from $X_i$ to $A'$) are the \emph{barrier edges}. A symmetric pair $\{e,e'\}$ of new edges \emph{violates} a barrier of $H$ if the barrier is not a barrier of $H\cup\{e,e'\}$. For instance, every edge with negative charge in \Cref{def:operation-vector} violates the barrier that defines the charges.

Linear-time algorithms that find a regular $s$-$s'$ path or a barrier are given in~\cite{GK96} (see also \cite[Section~7]{GK04}): a search structure is grown from $s$, and self-symmetric fragments are trimmed when they are discovered. These algorithms are stated for a fixed graph, while we need to stop at each barrier and continue after inserting edges. This is essentially free because in an Edmonds-type search the edges may be scanned in \emph{any} order: an inserted edge is just one more edge to scan, and no earlier step of the search has to be undone. Below we describe the algorithm in this form. Its analysis closely follows~\cite{GK96,GK04}, so we only sketch it.

\begin{lemma}\label{lem:resumable-rr}
    Given a source vertex $s$ (along with its mate $s'$) and an initial skew-symmetric graph $H$, there is an algorithm which runs until it either outputs a regular $s$-$s'$ path in $H$ or pauses. When it is paused, symmetric edge pairs can be inserted into $H$, and then the algorithm resumes. The algorithm has the following guarantees:
	%There is an algorithm that maintains a partition $V=A\sqcup A'\sqcup X_1\sqcup\cdots\sqcup X_k\sqcup M$, where $A'$ is the symmetric set of $A$ and the \emph{fragments} $X_i$ and $M$ are self-symmetric, together with an edge $e_i\in E$ from $A$ to $X_i$ for every $i$. Started on a graph $H$, it runs until it either outputs a regular $s$-$s'$ path in $H$ or \emph{pauses}. While it is paused, symmetric edge pairs may be inserted into $H$, and then it is resumed. The algorithm guarantees:
	\begin{enumerate}[label=(\roman*)]
		\item\label{it:rr-barrier} \textbf{Pauses.} Whenever the algorithm pauses, it has found a barrier $(A,A',X_1,\ldots,X_k,M)$ of the current $H$ and $A\cup X_1\cup\cdots\cup X_k$ is the set of vertices reachable from $s$ by regular paths in $H$. While symmetric edge pairs are inserted, the algorithm takes $O(1)$ time per pair and only resumes the search when some inserted pair violates the current barrier.
		\item\label{it:rr-monotone} \textbf{Progress.} The algorithm maintains a partition $(A,A',X_1,\ldots,X_k,M)$ which initially has $A=\{s\}$, $A'=\{s'\}$ and $k=0$. Between each resume and pause, the partition changes only through the following two operations:
		\begin{itemize}
			\item \emph{Grow:} a pair $\{v,v'\}\subseteq M$ leaves $M$; $v$ is added to $A$ and $v'$ to $A'$.
			\item \emph{Merge:} for a nonempty set $S\subseteq A$ and fragments $X_{i_1},\ldots,X_{i_r}$ ($r\ge 0$), the sets $S,S',X_{i_1},\ldots,X_{i_r}$ are replaced by the single fragment $S\cup S'\cup X_{i_1}\cup\cdots\cup X_{i_r}$, which receives a barrier edge from $A$.
		\end{itemize}
		%\item\label{it:rr-progress} %\textbf{Progress.} If none of the pairs inserted during a pause violates the current barrier, the algorithm pauses again immediately, with the same barrier, after $O(1)$ time per inserted pair. Otherwise, at least one operation of~\ref{it:rr-monotone} happens before the algorithm pauses again or outputs a path.
	\end{enumerate}
    The total running time is $\tilde O(m)$, where $m$ is the final number of edges of $H$.
\end{lemma}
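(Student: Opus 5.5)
We implement the Edmonds-type search of \cite{GK96,GK04} with a scan queue of unprocessed edges. The state is a partition $(A,A',X_1,\ldots,X_k,M)$ with barrier edges $e_i$, plus a search forest on $A$ rooted at $s$: each vertex of $A$ stores a parent edge, and each fragment stores the bud data needed to reconstruct regular paths through it. The data structures are a union--find over fragments, so that merged fragments are handled in near-constant amortized time, and adjacency lists of $H$. Initially $A=\{s\}$, $A'=\{s'\}$, $k=0$, everything else is in $M$, and the queue holds the edges of $H$ leaving $s$.

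The main loop pops an edge $(u,v)$ whose tail $u$ is in $A$, or in a fragment reachable from $A$, and classifies it using the partition.
\begin{itemize}
\item If $v\in M$, we perform a \emph{Grow}: $v$ joins $A$ with parent $(u,v)$, $v'$ joins $A'$, and the edges out of $v$ are enqueued.
\item If $v\in A'$, or the edge joins two different fragments or a fragment and $A$ in the forbidden way, a symmetric ``blossom'' is detected. Following \cite{GK96}, we either find a regular $s$-$s'$ path or contract. To find the path we walk up from $u$ and from $\sigma(v)$ to their nearest common ancestor in the forest. If that walk yields a regular path to $s'$, we output it. Otherwise the walk yields a set $S\subseteq A$ on the tree path together with the fragments met along the way. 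We \emph{Merge} these into one self-symmetric fragment, with the forest edge entering the set as the new barrier edge, and enqueue the edges newly leaving the enlarged reachable region.
\end{itemize}
These are exactly the two operations in \ref{it:rr-monotone}. Once the queue empties, the invariants of \cite{GK04} show the current partition is a barrier of $H$ in the sense of \Cref{def:barrier}, and that $A\cup\bigcup X_i$ is the regularly reachable set. This gives \ref{it:rr-barrier}, and we pause at that point.

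When a pair $\{e,e'\}$ is inserted during a pause, we test in $O(1)$ time whether it violates the barrier. By the case list preceding \Cref{def:operation-vector}, this is just a lookup of the endpoints' classes, done via union--find find operations. If it does not violate the barrier, we record it in the adjacency lists: its tail is then outside the scanned region, or it is harmless. If it violates the barrier, we push it on the queue and resume. The key invariant for correctness of resuming is that the search result does not depend on the order in which edges are scanned: every edge of $H$ whose tail is in the reachable region is either already scanned or sits in the queue. Inserted edges respect this invariant, and nothing previously done must be undone, since both Grow and Merge only shrink $M$ and coarsen fragments.

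For the running time, each vertex is grown at most once. Merges reduce the number of parts, so there are $O(n)$ of them. Each edge is enqueued $O(1)$ times per time its tail's class changes, which is $O(1)$ times overall up to union--find. Each blossom walk is charged to the vertices it absorbs, using marking from both sides so that the walk length is proportional to the absorbed set. This gives $\tilde O(m)$ in total.

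The main obstacle is the detailed invariant proof that the paused partition is a genuine barrier. In particular, fragments must be self-symmetric and receive exactly one edge from $A$, and the regular path recovered through nested fragments must be valid. For this we would cite the structural lemmas of \cite{GK96} essentially verbatim rather than reprove them.
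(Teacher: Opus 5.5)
Your proposal follows the paper's argument: a Goldberg--Karzanov/Edmonds-type search that contracts the regular $w$-to-$w'$ path found at a common ancestor into a fragment, and resumes after insertions by treating violating pairs as extra edges to scan (scan order does not matter and nothing is undone), with the barrier invariants deferred to \cite{GK96} as in the paper's sketch. One detail of your merge rule needs tightening for general skew-symmetric graphs (the situation cannot arise in residual matching graphs): if the ancestor $w$ was grown from a fragment $X_j$, or a fragment's barrier edge leaves a vertex of $S$ that is not on the blossom path, that fragment must be absorbed in the same Merge (in the first case with $e_j$ as the new barrier edge), because otherwise the new fragment is entered from another fragment rather than from $A$, and your rescanning repairs this only through a later Merge with $S=\emptyset$, which (ii) does not allow.
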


The algorithm for static regular reachability depends on the following contraction operation.

\begin{theorem}[Theorem 2.3 of \cite{GK96}]
	Suppose that there is a regular path $P$ from some vertex $v$ to its symmetric vertex $v'$, and there is another regular path $Q$ from $s$ to $v$ that does not intersect $P$ except at $v$. Let $e_v$ be the last edge of $Q$ incident to $v$. Define the contraction operation as follows:
	\begin{itemize}
		\item Remove all vertices on $P$ and the symmetric path $P'$ except for $v$ and $v'$. Replace the path with a pair of symmetric edge from $v$ to $v'$.
		\item For any edge whose destination is on $P$ or $P'$ (except edge $e_v$), redirect it to $v'$.
		\item For any edge whose source is on $P$ or $P'$ (except edge $e_v'$), redirect it to $v$.
	\end{itemize} 
	This contraction operation preserves regular reachability from $s$ to $s'$, and a regular path from $s$ to $s'$ in the original graph can be reconstructed from a regular path in the contracted graph efficiently.
\end{theorem}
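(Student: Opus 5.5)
We would prove the contraction theorem in the standard Edmonds-blossom style, arguing both directions of ``preserves regular reachability''. Write $G$ for the graph before contraction and $\hat G$ for the contracted graph. Call $B = V(P)\cup V(P')\setminus\{v,v'\}$ the absorbed set; $B$ is self-symmetric.

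We first check that $\hat G$ is again skew-symmetric. The new edge pair $v\to v'$ is self-consistent under $\sigma$. Redirecting heads into $B$ to $v'$ and tails in $B$ to $v$ is compatible with $\sigma$: if $e$ has head in $B$, then $e'$ has tail in $B'=B$, and $v'$ maps to $v$. The exceptional edges $e_v$ and $e_v'$ are a symmetric pair. This step is routine bookkeeping.

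\textbf{Direction 1 (contracted path gives an original path).} Let $\hat R$ be a regular $s$-$s'$ path in $\hat G$. If $\hat R$ avoids $v$, $v'$ and all redirected edges, it already lies in $G$. Otherwise, because $\hat R$ is regular, it passes through the ``blossom'' $\{v,v'\}$ at most once in a meaningful way. It enters at $v$ or $v'$ through some redirected edge $(a,x)$ with $x\in V(P)\cup V(P')$, and leaves through a redirected edge $(y,b)$ with $y\in V(P)\cup V(P')$; if it uses the new edge $v\to v'$, that edge stands in for $P$. We expand the passage by routing from $x$ to $y$ inside $P\cup P'$. There are four cases, depending on whether $x$ and $y$ lie on $P$ or on $P'$.
\begin{itemize}
\item If $x$ and $y$ are on the same one of $P$, $P'$ with $x$ before $y$, we take the subpath between them.
\item Otherwise we go $x\to v'$ along the tail of $P$, or $x\to v$ along $P'$, as needed, and then continue on the other copy to $y$.
\end{itemize}
This is exactly the classical ``walk around the blossom in the right direction'' argument: since $P$ is regular and $P'$ uses the mirror edges in reverse, a route that uses a suffix of $P$ followed by a prefix of $P'$ never uses both $e$ and $e'$. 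The path $Q$ and the exception for $e_v$ ensure that $\hat R$ cannot enter $v$ via $e_v$ while also using the new pair in conflict. Finally, we must verify that the expanded segment does not conflict with the rest of $\hat R$. The rest of $\hat R$ avoids $B$, and every edge of $P\cup P'$ has both endpoints in $B\cup\{v,v'\}$, so the mirror images of its edges do not appear outside. Reconstruction therefore takes time linear in $|P|$.

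\textbf{Direction 2 (original path gives a contracted path).} Let $R$ be a regular $s$-$s'$ path in $G$. If $R$ avoids $B$, it survives in $\hat G$ after possibly replacing edges incident to $v$ and $v'$, which remain unchanged except for the redirections. Otherwise, let $a\to x$ be the first edge of $R$ entering $B\cup\{v,v'\}$ and $y\to b$ the last edge leaving it. In $\hat G$ these become $a\to v'$ and $v\to b$; we connect them with the new edge $v'\to$? Here orientation matters, and this is the main obstacle. We need the redirected entering and leaving edges to land at a pair joined by the new edge in the right direction.

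Our first attempt is to use the path $Q$. If $R$ first enters the blossom, we may instead consider $Q$ followed by the blossom and reroute so that we enter at $v$ via $e_v$. Any other entry is redirected to $v'$, and from $v'$ one can leave only through edges redirected to $v$. Those exits are reachable from $v$, so we use the new edge $v\to v'$ or its mirror to fix the parity. We then argue regularity by an exchange argument in the style of Goldberg--Karzanov: if the shortcut creates a conflict $e,e'$ with the rest of $R$, then $R$ already had a symmetric conflict, or we can shorten $R$. The delicate part is handling $R$ re-entering $B$ several times; we take the first entry and the last exit and discard the middle. The parity/orientation bookkeeping in this step is where we expect the real work, and we would lean on the corresponding lemma in \cite{GK96} for it.
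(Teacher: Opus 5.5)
The paper does not prove this statement; it quotes it from \cite{GK96}. Judged on its own, your proposal has a genuine gap. The substantive direction, that a regular $s$-$s'$ path in $G$ yields one in $\hat G$, is not proved: you defer exactly the hard step to \cite{GK96}, and the plan you sketch cannot work. After contraction, the only edge entering $v$ is $e_v$ and the only edge leaving $v'$ is $e_v'$; the new pair runs $v\to v'$. Suppose $R$ first enters $C=V(P)\cup V(P')$ by an edge other than $e_v$ and last leaves it by an edge other than $e_v'$. Then the entry now ends at $v'$, the exit now starts at $v$, and nothing leads from $v'$ to $v$. So ``first entry, last exit, discard the middle'' yields no path. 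Repairing with $Q$ is not automatic either: $R$ may meet $Q\cup Q'$ before reaching $C$, and the obvious splice that follows $Q$ into $v$ and $Q'$ out of $v'$ uses both $e_v$ and $e_v'$. Even your ``easy'' case fails. A path that avoids $B$ but enters $v$ by an edge other than $e_v$ is cut in $\hat G$, because that edge now points to $v'$. This is the same obstruction as in Edmonds' blossom lemma, and it needs a global argument rather than local surgery.

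The missing idea is to prove the contrapositive with barriers (\Cref{thm:regular-path-barrier}). If $\hat G$ has no regular $s$-$s'$ path, take a barrier $(\hat A,\hat A',\hat X_1,\ldots,\hat X_k,\hat M)$ of $\hat G$. A regular path from $s$ never leaves $\hat A\cup\bigcup_i\hat X_i$: the only edges leaving this set are the $e_i'$, and $\hat X_i$ can only be entered from inside the set through $e_i$. Since $Q$ survives the contraction, $v$ lies in this set. The new edge $v\to v'$ rules out $v\in\hat A$, so $v,v'\in\hat X_j$ for some $j$. Replacing $\hat X_j$ by $\hat X_j\cup B$ then gives a barrier of $G$. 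The reason is that redirection is a bijection between edges joining $C$ to the outside in $G$ and edges joining $\{v,v'\}$ to the outside in $\hat G$, and it preserves the outside endpoint. In particular, the unique edge from $\hat A$ into $\hat X_j$ stays unique. This is where $Q$ and the new pair are actually used, and it matches the appendix's view of fragments as contracted pairs.

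Your reconstruction direction also rests on a wrong picture of the blossom. Since $P$ runs from $v$ to $v'$, its mirror $P'$ runs from $\sigma(v')=v$ to $\sigma(v)=v'$ as well. So ``$x\to v$ along $P'$'' is impossible. Likewise, ``a suffix of $P$ followed by a prefix of $P'$'' neither connects nor is regular, because the last edge of $P$ and the first edge of $P'$ are mirror images.

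No routing between arbitrary entry and exit points is needed. By the degree structure above, a regular $s$-$s'$ path in $\hat G$ visits at most one of $v,v'$, and never uses the new pair, since that would require both $e_v$ and $e_v'$. There are two cases:
\begin{itemize}
\item It enters $v$ by $e_v$ and leaves by a redirected edge $(y,z)$. Insert the prefix of $P$ or of $P'$ from $v$ to $y$.
\item It enters $v'$ by a redirected edge $(p,x)$ and leaves by $e_v'$. Insert the suffix of $P$ or of $P'$ from $x$ to $v'$.
\end{itemize}
The result is regular. The inserted piece is a subpath of a regular path with all edges inside $C$, and the rest of the path avoids $C$. The mirror of the redirected entry or exit edge would have to enter $v'$ (resp.\ leave $v$), which the path does not visit. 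Finally, you should state explicitly that $s\notin C$ and that $Q$ avoids $P'$ too, so that the tail of $e_v$ lies outside $C$; the informal statement leaves both implicit.
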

%proceeds as follows.
The algorithm works as follows. We maintain a set $S$ of vertices which we already know can be reached from $s$ via a regular path, and a tree $T$ that represents all regular paths from $s$ to the vertices in $S$. We run a breadth-first search like algorithm to explore new vertices. If we find a regular path $P_1$ from $s$ to a vertex $v$, and another regular path $P_2$ from $s$ to its symmetric vertex $v'$, we can take the path $P_1\circ P_2'$, which either gives a regular path from $s$ to $s'$ directly, or gives a regular path from some vertex $u$ to its symmetric vertex $u'$ that contains $v$. In the latter case, we apply the contraction operation to this path, which contracts both $v$ and $v'$. Using these steps, $S$ never contains both a vertex and its symmetric vertex simultaneously, except for these contracted pairs at the boundary. When the algorithm cannot make further progress, all edges exiting $S$ are from those contracted pairs at the boundary, which corresponds to the fragments in the barrier. The remaining part of $S$ corresponds to the set $A$ in the barrier. The property of maintaining fragments that we want comes from the contraction operation. Resuming the algorithm after adding new edges is essentially the same as continuing the breadth-first search.

\end{document}